\documentclass[1p,final]{elsarticle}
\usepackage{amsfonts,natbib,pslatex}
\usepackage{latexsym,amsmath,amsthm,amssymb,amsxtra,mathrsfs,times,CJK}
\usepackage{color}
\usepackage{amsfonts,color}
\usepackage{amssymb,amsmath,latexsym}
\usepackage{amssymb}
\usepackage{makecell}
\usepackage{amsfonts}
\usepackage{indentfirst}
\usepackage{amsmath}
\usepackage{amsthm}
\usepackage{amssymb}
\usepackage{color}
\usepackage{graphicx}
\usepackage[para]{threeparttable}
\usepackage{booktabs}
\usepackage{caption}
\usepackage{makecell}
\usepackage{diagbox}
\usepackage{multirow}
\usepackage{epstopdf}
\usepackage{subfigure}
\usepackage{booktabs}
\usepackage{hyperref}

\usepackage[para]{threeparttable}

\newcommand{\ord}{{\rm ord}}

\newcommand{\lcm}{{\rm lcm}}

\newtheorem{theorem}{Theorem}[section]
\newtheorem{lemma}[theorem]{Lemma}
\newtheorem{corollary}[theorem]{Corollary}

\newtheorem{example}[theorem]{Example}
\newtheorem{Proposition}[theorem]{Proposition}
\newtheorem{remark}[theorem]{Remark}

\begin{document}

	\begin{frontmatter}

		\title{MDS and AMDS symbol-pair codes constructed from  repeated-root codes \tnoteref{fn1}}

		\author[SWJTU]{Xiuxin Tang}
		\ead{XiuxinTang@my.swjtu.edu.cn}
		\author[SWJTU]{Rong Luo\corref{cor1}}
		\ead{luorong@swjtu.edu.cn}

		\cortext[cor1]{Corresponding author}
		\address[SWJTU]{School of Mathematics, Southwest Jiaotong University, Chengdu, 610031, China}

		\begin{abstract}
		Symbol-pair codes introduced by Cassuto and Blaum in 2010 are designed to protect against the pair errors in symbol-pair read channels. One of the central themes in symbol-error correction is the construction of maximal distance separable (MDS)  symbol-pair codes that possess the largest possible pair-error correcting performance.  Based on repeated-root cyclic codes, we construct  two classes of MDS symbol-pair codes for more general generator polynomials and also give a new class of almost MDS (AMDS) symbol-pair codes with the length $lp$. In addition, we derive all MDS and AMDS symbol-pair codes with length  $3p$, when the degree of the generator polynomials is no more than 10. The main results are obtained by determining the solutions of certain equations over finite fields.
		\end{abstract}
		
		\begin{keyword}
			MDS symbol-pair codes, AMDS symbol-pair codes, Minimum symbol-pair distance, Repeated-root cyclic codes		
		\end{keyword}
		
	\end{frontmatter}
	
	\section{Introduction}
	
Cassuto and Blaum (2010) proposed a new coding framework called symbol-pair codes to combat symbol-pair errors over symbol-pair read channels in \cite{Cassuto-2010-SIT} with the development of high-density data storage technologies. For example, Blu-ray disc is a high-density data storage symbol-pair read channels for the practical application. The seminal works \cite{Cassuto-2010-SIT}--\cite{Cassuto-2011-SIT} established relationships between an error-correcting code's minimum Hamming distance and the minimum symbol-pair distance, discovered methods for code construction and decoding, and obtained lower and upper bounds on code size. If a code $\mathcal{C}$ over ${\mathbb F}_p^{{n}}$ with length $n$ contains $M$ elements and has the minimum symbol-pair distance ${d_p}$, then $\mathcal{C}$ is referred as an ${\left( {n,M,{d_p}} \right)_p}$ symbol-pair code. Finding symbol-pair codes with high symbol-pair error correcting performance has become a significant theoretical challenge. 

In 2012, Chee et al. \cite{Chee-2012-SIT} established the Singleton-type bound on symbol-pair codes. Similar to the classical codes, the symbol-pair codes meeting the Singleton-type bound are called MDS symbol-pair codes and almost MDS symbl-pair codes are denoted AMDS symbol-pair codes. MDS symbol-pair codes are the most useful and interesting symbol-pair codes due to their optimality. Many researchers used various mathematical tools to try to obtain MDS symbol-pair codes. Constructing MDS symbol-pair codes is thus important both in theory and in practice. However, because determining the exact values of symbol-pair distances of constacyclic codes is a very complicated and difficult task in general, little work has been done on it.

In 2013, Chee et al. \cite{Chee-2013-SIT} obtained some MDS symbol-pair codes from classical MDS codes.
Between 2015 and 2018, researchers  \cite{Kai-2015-TIT}\cite{Kai-2018-NCT}\cite{Li-2017-DCC} constructed some MDS symbol-pair codes with minimum symbol-pair distance 5 and 6 from constacyclic codes.
In 2017, Chen et al. \cite{Chen-2017-CIT} proposed to construct MDS symbol-pair codes by repeated-cyclic codes and obtained some length $3p$ MDS symbol-pair codes with symbol-pair distance 6 to 8 and MDS ${\left( {lp,5} \right)_p}$  symbol-pair codes. In the next few years, MDS symbol-pair codes with some new parameters were found  by using repeated-root codes over ${\mathbb F}_p$. 
In 2018, Dinh et al. \cite{Dinh-2018-TIT} presented all MDS symbol-pair codes with prime power lengths by repeated-root constacyclic codes. Two years later, Dinh et al. \cite{Dinh-2020-TIT} also constructed a families of MDS sympol-pair codes with length $2p^s$. 
In the next 2019 to 2021, Zhao \cite{Zhao-2019-Doc} \cite{Zhao-2021-CL} 
constructed some MDS symbol-pair codes using repeated-root constacyclic codes.
In 2022, Ma et al. \cite{Ma-2022-DCC} obtained MDS $ (3p,10)_{p} $  and MDS $ (3p,12)_{p} $ sympol-pair codes.

Inspired by these works, in order to obtain longer and more flexible code length, as well as a larger minimum symbol-pair distance, this paper proves that there are more general generator polynomials for MDS $ {(lp,6)}_p $ and MDS $ {(lp,5)}_p $ symbol-pair codes by using repeated-cyclic codes. Furthermore, the parameter AMDS $ {(lp,7)}_p $  symbol-pair codes are obtained by using repeated-cyclic codes. For length $n=3p$, this paper gives all MDS and AMDS symbol-pair codes from repeated-root cyclic codes $\mathcal{C}_{(r_{1},r_{2},r_{3})}$, when the degree of the generator polynomial $g_{(r_{1},r_{2},r_{3})}(x)$ is no more than 10, i.e., $\deg( g_{(r_{1},r_{2},r_{3})}(x))\le 10$.  

The rest of this paper is structured as follows. In Section 2, we introduce some basic notations and results on symbol-pair codes. In Section 3, we derive some new classes of MDS symbol-pair codes and AMDS symbol-pair codes from repeated-root cyclic codes. In section 4, we conclude the paper.

\section{Preliminaries}

In this section, we introduce some notations and auxiliary tools on symbol-pair codes, which will be used to prove our main results in the sequel. We donote that ${\mathbb F}_p$ and ${\mathbb F}_q$ are finite fields, where $p$ is an odd prime and $q=p^m$. Then we denote ${\mathbb F}_p^{\text{*}}$ is the cyclic group ${\mathbb F}_p\verb|\|\{0\}$.
Let $${\bf{\textit{\textbf{x}}}} = \left( {{x_0},{x_1}, \cdots ,{x_{n - 1}}} \right)$$ be a vector in ${\mathbb F}_p^{n}$. Then the symbol-pair read vector of $\textit{\textbf{x}}$ is
$$\pi \left( {\bf{\textit{\textbf{x}}}} \right) = \left[ {\left( {{x_0},{x_1}} \right),\left( {{x_1},{x_2}} \right), \cdots ,\left( {{x_{n - 1}},{x_0}} \right)} \right].$$
Similar to the Hamming weight ${\omega _H}\left( {\bf{\textit{\textbf{x}}}} \right)$ and Hamming distance $ {D_H}\left( \bf{\textit{\textbf{x}}},\bf{\textit{\textbf{y}}} \right) $. The symbol-pair weight ${\omega _p}\left( {\bf{\textit{\textbf{x}}}} \right)$ of the symbol-pair vector ${\bf{\textit{\textbf{x}}}}$ is defined as
$${\omega _p}\left( {\bf{\textbf{\textit{x}}}} \right) = \left| {\left\{ {i\left| {\left( {{x_i},{x_{i + 1}}} \right) \ne \left( {0,0} \right)} \right.} \right\}} \right|.$$
The symbol-pair distance $ {D_p}\left( \bf{\textit{\textbf{x}}},\bf{\textit{\textbf{y}}} \right) $ between any two vectors $\bf{\textit{\textbf{x}}},\bf{\textit{\textbf{y}}}$ is
$${D_p}\left( \bf{\textit{\textbf{x}}},\bf{\textit{\textbf{y}}} \right) = \left| {\left\{ {\left. i \right|\left( {{x_i},{x_{i + 1}}} \right) \ne \left( {{y_i},{y_{i + 1}}} \right)} \right\}} \right|.$$
The minimum symbol-pair distance of a code $\mathcal{C}$ is
$${d_p} = \min \left\{ { {{D_p}\left( \bf{\textit{\textbf{x}}},\bf{\textit{\textbf{y}}} \right)\left|\; \bf{\textit{\textbf{x}}},\bf{\textit{\textbf{y}}} \in \mathcal{C} \right.} } \right\},$$
and we denote $ (n,k,d_{p})_{p} $ a symbol-pair code   with length $ n $, dimension $ k $ and minimum symbol-pair distance $ d_{p} $ over  ${\mathbb F}_p$. For any code $\mathcal{C}$ of length $n$ with $0 < d_{H}(\mathcal{C}) < n$, there is an important inequality  between $d_{H}(\mathcal{C})$ and $d_{p}(\mathcal{C})$ 
in \cite{Cassuto-2011-TIT}: $$d_{H}(\mathcal{C})+1 \le  d_{p}(\mathcal{C}) \le  2d_{H}(C).$$

In this paper, we always regard the codeword \textbf{c} in $\mathcal{C}$ as the corresponding polynomial $c(x)$.
The following lemmas will be applied in our later proofs.

Similar to classical error-correcting codes, the size of symbol-pair codes satisfies the following Singleton bound. The symbol-pair code achieving the Singleton bound is called a maximum distance separable (MDS) symbol-pair code.

\begin{lemma}\label{lemma 3.4 }	
	
	(\cite{Chee-2012-SIT})If $\mathcal{C}$ is a symbol-pair code with length $n$ and minimum symbol-pair distance ${d_p}$ over ${\mathbb F}_q$, we call an  $ (n,d_p)_{p} $ symbol-pair code of size $ q^{n - {d_p} + 2} $ maximum distance separable (MDS)  and an  $ (n,d_p)_{p} $ symbol-pair code of size $ q^{n - {d_p} + 1} $ almost maximum distance separable (AMDS) for $ q\geqslant 2 $.
	
\end{lemma}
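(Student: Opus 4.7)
The statement is essentially a naming convention built on top of the Singleton-type bound for symbol-pair codes attributed to Chee et al., so the substantive content to prove is the underlying inequality $|\mathcal{C}| \le q^{n-d_p+2}$; the MDS label then marks codes meeting this bound and the AMDS label marks codes whose size falls short by a single factor of $q$. My plan is to establish the bound by a standard puncturing (projection) argument and then read off the two named size levels as definitions.

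Concretely, I would define the projection map $\phi : \mathcal{C} \to \mathbb{F}_q^{\,n-d_p+2}$ that sends a codeword to its restriction to the first $n-d_p+2$ coordinates, thereby deleting the block $B=\{n-d_p+2,\ldots,n-1\}$ of $d_p-2$ consecutive coordinates. If $\phi$ is injective, then $|\mathcal{C}| \le q^{n-d_p+2}$ is immediate, so the whole proof reduces to proving injectivity of $\phi$.

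To see $\phi$ is injective, I would take two codewords $\mathbf{c},\mathbf{c}'\in\mathcal{C}$ which agree on every coordinate outside $B$ and count the symbol-pair positions $i$ at which $(c_i,c_{i+1})\neq(c'_i,c'_{i+1})$ (indices mod $n$). The only such $i$ are the ones where at least one entry of the pair lies in $B$: these are $i=n-d_p+1$ (straddling the left edge of $B$), the internal positions $i=n-d_p+2,\ldots,n-2$ (contributing $d_p-3$ positions), and $i=n-1$ (the wrap-around pair straddling the right edge of $B$). This totals at most $1+(d_p-3)+1 = d_p-1$ positions, so $D_p(\mathbf{c},\mathbf{c}') \le d_p-1 < d_p$. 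Since $d_p$ is the minimum symbol-pair distance over distinct codewords, we must have $\mathbf{c}=\mathbf{c}'$, proving injectivity.

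The only real subtlety is the cyclic boundary accounting in the symbol-pair weight, since the pair $(x_{n-1},x_0)$ wraps around; one has to choose the punctured block so that only one boundary pair (not both) is affected, which is the reason for deleting a single contiguous block $B$ rather than arbitrary coordinates. Once the bound $|\mathcal{C}|\le q^{n-d_p+2}$ is in hand, the lemma simply names the two size levels: symbol-pair codes of size $q^{n-d_p+2}$ are MDS (meeting the bound) and those of size $q^{n-d_p+1}$ are AMDS (missing it by one factor of $q$), with $q\ge 2$ ensuring the two levels are distinct.
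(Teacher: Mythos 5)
Your proposal is correct, but it is worth noting that the paper offers no proof of this statement at all: Lemma~\ref{lemma 3.4 } is imported verbatim from \cite{Chee-2012-SIT} and functions purely as a definition (which codes are called MDS, which AMDS), while the underlying Singleton-type bound $|\mathcal{C}|\le q^{n-d_p+2}$ is stated separately as Lemma~\ref{lemma 3.2 } and likewise cited without proof. You correctly identified that the only substantive mathematical content here is that bound, and your puncturing argument is the standard proof of it: delete a contiguous block $B$ of $d_p-2$ coordinates, observe that two codewords agreeing off $B$ can differ in at most $|B|+1=d_p-1$ symbol-pair positions (the $d_p-2$ pairs indexed by elements of $B$ plus the one pair entering $B$ from the left), so the projection is injective and $|\mathcal{C}|\le q^{n-d_p+2}$. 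Your arithmetic $1+(d_p-3)+1=d_p-1$ is right; the only slightly misleading sentence is the remark that ``only one boundary pair (not both) is affected''---in fact both boundary pairs are affected and you count both; the real point of contiguity is that $|B|$ deleted coordinates touch only $|B|+1$ pairs rather than up to $2|B|$. One should also tacitly assume $2\le d_p\le n$ (as Lemma~\ref{lemma 3.2 } does) so that the deleted block is well defined. With that, reading off the two size levels $q^{n-d_p+2}$ and $q^{n-d_p+1}$ as the MDS and AMDS labels is exactly what the cited definition does, so your write-up supplies a self-contained justification that the paper leaves to the reference.
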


\begin{lemma}\label{lemma 3.2 }
	(\cite{Chee-2013-SIT}) Let $q \geqslant 2$ and $2 \leqslant {d_p} \leqslant n$. If $\mathcal{C}$ is a symbol-pair code with length $n$ and minimum symbol-pair distance ${d_p}$ over ${\mathbb F}_q$, then $\left|\mathcal{C}\right| \leqslant {q^{n - {d_p} + 2}}.$
\end{lemma}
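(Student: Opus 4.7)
The plan is to establish the Singleton-type bound by a standard puncturing argument: fix the coordinate window $S=\{0,1,\dots,n-d_p+1\}$ of size $n-d_p+2$ and consider the restriction map $\pi_S:\mathcal{C}\to\mathbb{F}_q^{\,n-d_p+2}$ that sends each codeword to its entries indexed by $S$. It suffices to prove $\pi_S$ is injective, since this immediately gives $|\mathcal{C}|\le q^{n-d_p+2}$. The argument parallels the classical Singleton bound in the Hamming metric and uses no structural hypothesis on $\mathcal{C}$ beyond the definitions of $D_p$ and $d_p$.

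The key step will be to rule out distinct codewords $\mathbf{c}_1,\mathbf{c}_2\in\mathcal{C}$ that agree throughout $S$. For such a pair, the disagreement set $T=\{j:c_{1,j}\ne c_{2,j}\}$ must lie inside the complementary block $\{n-d_p+2,\dots,n-1\}$ of length $d_p-2$. Unwinding the definition of $D_p$, a pair-position $i$ contributes to $D_p(\mathbf{c}_1,\mathbf{c}_2)$ exactly when $i\in T$ or $i+1\in T$ (indices read modulo $n$). Hence the set of contributing positions lies in $T\cup\{j-1:j\in T\}\subseteq\{n-d_p+1,\dots,n-1\}$, a set of cardinality $d_p-1$. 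Therefore $D_p(\mathbf{c}_1,\mathbf{c}_2)\le d_p-1<d_p$, contradicting the minimum-distance hypothesis; so $\mathbf{c}_1=\mathbf{c}_2$ and $\pi_S$ is injective.

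The only step requiring a touch of care is the cyclic wrap-around, because the final pair is $(x_{n-1},x_0)$: one must verify that coordinate $0$ never spawns a contribution outside the block of size $d_p-1$. This is automatic, since $0,1\in S$ force the codewords to agree at both coordinates $0$ and $1$, so pair-position $0$ cannot contribute, and every contributing position stays inside $\{n-d_p+1,\dots,n-1\}$. I expect this boundary bookkeeping to be the only subtle point of the proof; beyond it the result reduces to counting the window of $d_p-1$ pair-positions that could possibly be affected, which forces the strict inequality $D_p(\mathbf{c}_1,\mathbf{c}_2)<d_p$ and closes the argument.
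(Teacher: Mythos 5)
Your argument is correct: puncturing to the first $n-d_p+2$ coordinates and observing that two codewords agreeing there could differ only on a block of $d_p-2$ positions, hence have pair distance at most $d_p-1$, is exactly the standard proof of this Singleton-type bound given in the cited reference (the paper itself states the lemma without proof). Your handling of the cyclic wrap-around via $0\in S$ is the right bookkeeping and closes the only delicate point.
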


In some cases, the bound of minimum symbol-pair distance can be improved. 
\begin{lemma}\label{lemma 3.3 }
	(\cite{ Chen-2017-CIT}) Let $\mathcal{C}$ be an $\left[ {n,k,{d_H}} \right]$ constacyclic code over ${{\mathbb F}_q}$ with ${\text{2}} \leqslant {{\text{d}}_{\text{H}}}\leqslant{\text{n}}$. Then we have the following ${d_p}\left( {\mathcal{C}} \right) \geqslant {d_H} + 2$ if and only if $\mathcal{C}$ is not an MDS code.
\end{lemma}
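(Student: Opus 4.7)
The plan is to prove both directions of the equivalence. For the forward direction, $\mathcal{C}$ being MDS gives $d_H=n-k+1$, and the symbol-pair Singleton bound (Lemma~\ref{lemma 3.2 }) gives $d_p\le n-k+2=d_H+1$; together with the inequality $d_p\ge d_H+1$ recalled from \cite{Cassuto-2011-TIT} in the preliminaries, this forces $d_p=d_H+1$, hence $d_p<d_H+2$. All of the substance therefore lies in the reverse direction, which I would prove by contrapositive: assuming $d_p=d_H+1$, deduce that $d_H=n-k+1$.

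The main combinatorial tool I would set up is the identity
\[
\omega_p(c)=\omega_H(c)+t(c),
\]
valid for any codeword $c$ having at least one zero coordinate, where $t(c)$ denotes the number of maximal cyclic runs of nonzero coordinates in $c$. A direct count of the positions $i$ with $(c_i,c_{i+1})=(0,0)$ yields this: the zero coordinates split cyclically into $t(c)$ runs of total length $n-\omega_H(c)$, and each run of length $z$ contributes $z-1$ such positions. From this I would pick a codeword $c^{*}$ attaining $\omega_p(c^{*})=d_p=d_H+1$ and, handling separately the minor edge case in which $c^{*}$ has no zero coordinate (which forces $d_H=n-1$ and is reduced to a minimum-weight codeword), extract a codeword of Hamming weight exactly $d_H$ whose nonzero support forms a single cyclic run.

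The constacyclic structure then finishes the argument. Applying an appropriate constacyclic shift to this single-run codeword moves its support into $\{0,1,\ldots,d_H-1\}$, giving a codeword polynomial $c(x)$ of degree exactly $d_H-1$. Since the generator polynomial $g(x)$ of $\mathcal{C}$ has degree $n-k$ and must divide $c(x)$, we obtain $d_H-1\ge n-k$; combined with the classical Singleton bound $d_H\le n-k+1$ this forces equality, so $\mathcal{C}$ is MDS. The step I expect to require the most care is the combinatorial identity $\omega_p=\omega_H+t$ together with the degenerate no-zero case; once that identity is in place, neither the reduction to a single-run codeword nor the degree argument against $g(x)$ is hard.
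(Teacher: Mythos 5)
The paper does not prove this lemma itself—it is quoted from \cite{Chen-2017-CIT}—and your argument is correct and is essentially the proof given in that reference: the run-count identity $\omega_p(c)=\omega_H(c)+t(c)$ forces a minimum-pair-weight codeword to be a single-run codeword of Hamming weight $d_H$, and constacyclically shifting it to a polynomial of degree $d_H-1$ divisible by $g(x)$ yields $d_H-1\ge n-k$, hence MDS. The forward direction via the two Singleton bounds and your treatment of the no-zero-coordinate edge case (which only occurs when $d_H\ge n-1$, where the code is automatically MDS) are likewise sound.
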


The next lemma will be used by the later proof of Proposition \ref{Proposition C_{421}}. 

\begin{lemma}\label{lemma 3.6 }
	(\cite{Chen-2017-CIT}) Let $n=3p$ with $p \equiv 1\left( {\bmod~3} \right)$. If $\mathcal{C}_{(3,2,1)}$ is the cyclic code in ${{\mathbb F}_p}\left[ x \right]/\left\langle {{x^n} - 1} \right\rangle $ generated by 
	\[g_{(3,2,1)}(x) = {(x - 1)^3}{(x - \omega )^2}(x - {\omega ^2}),\]
	then $\mathcal{C}_{(3,2,1)}$ is an MDS $ (3p,8)_{p} $ symbol-pair code, where $\omega$ is a primitive $3$-th  root of unity in ${{\mathbb F}_p}$.
	
\end{lemma}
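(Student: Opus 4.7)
The plan is to combine the Singleton bound for symbol-pair codes (Lemma \ref{lemma 3.2 }) with a lower bound $d_p \geq 8$ obtained from the Hamming distance. Since $\deg g_{(3,2,1)}(x) = 6$, the code has dimension $k = 3p - 6$, and Lemma \ref{lemma 3.2 } yields $d_p \leq n - k + 2 = 8$. It remains to prove $d_p \geq 8$, since then $|\mathcal{C}_{(3,2,1)}| = p^{3p-6} = p^{n - d_p + 2}$ automatically certifies the MDS symbol-pair property.

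To lower-bound $d_p$, I would first establish $d_H(\mathcal{C}_{(3,2,1)}) \geq 6$. Because $x^{3p} - 1 = (x-1)^p(x-\omega)^p(x-\omega^2)^p$ and $p > 3$, membership of $c(x)$ in $\mathcal{C}_{(3,2,1)}$ is equivalent to the six scalar conditions $c(1) = c'(1) = c''(1) = c(\omega) = c'(\omega) = c(\omega^2) = 0$, which assemble into a parity check matrix whose column at position $i \in \{0,1,\dots,3p-1\}$ is
\[
H_i = \bigl(1,\; i,\; i(i-1),\; \omega^i,\; i\,\omega^{i-1},\; \omega^{2i}\bigr)^{\mathrm T},
\]
with the first three entries interpreted modulo $p$ and the evaluation entries depending only on $i \bmod 3$. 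Proving $d_H \geq 6$ is then equivalent to showing that any five of these $3p$ columns are $\mathbb{F}_p$-linearly independent; I would carry this out by case analysis on the distribution of the five indices among the three residue classes modulo $3$, reducing each case to a $5 \times 5$ Vandermonde-like minor together with an explicit polynomial identity over $\mathbb{F}_p$ whose non-vanishing uses $p > 3$.

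Once $d_H \geq 6$ is in hand, the conclusion splits into two sub-cases. If $d_H \geq 7$, the general bound $d_p \geq d_H + 1$ recorded in the preliminaries immediately gives $d_p \geq 8$. If $d_H = 6$, the code is not a classical MDS code since $n - k + 1 = 7$, so Lemma \ref{lemma 3.3 } yields $d_p \geq d_H + 2 = 8$. Either way, combining with $d_p \leq 8$ forces $d_p = 8$, and $\mathcal{C}_{(3,2,1)}$ is therefore an MDS $(3p, 8)_p$ symbol-pair code.

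The main obstacle will be the lower bound $d_H \geq 6$. The parity check matrix mixes Vandermonde-type rows in $\omega^i$ with ``derivative'' rows polynomial in $i$, so neither the classical BCH bound nor a direct Reed--Solomon argument suffices, and ruling out accidental linear dependencies among any five of the $3p$ columns demands a careful case analysis driven by the interplay between residues modulo $3$ and modulo $p$.
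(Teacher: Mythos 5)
Your strategy breaks down at its central claim: $d_H(\mathcal{C}_{(3,2,1)})\geq 6$ is false. The polynomial $c(x)=(x^3-1)^3=x^9-3x^6+3x^3-1$ equals $(x-1)^3(x-\omega)^3(x-\omega^2)^3$, hence is divisible by $g_{(3,2,1)}(x)$, is nonzero in ${\mathbb F}_p[x]/\langle x^{3p}-1\rangle$ (its degree is $9<3p$), and has Hamming weight $4$ since $p\neq 3$. Equivalently, in your parity-check picture the four columns $H_0,H_3,H_6,H_9$ satisfy $-H_0+3H_3-3H_6+H_9=0$, so no amount of case analysis will show that every five columns are independent. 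The Castagnoli formula (Lemma \ref{lemma 3.1}) confirms this: the minimum of $P_t\cdot d_H(\overline{\mathcal C}_t)$ is attained at $t=3$, where $\overline{\mathcal C}_3$ is the full space and $P_3=4$, giving $d_H=4$ exactly. Consequently the route through Lemma \ref{lemma 3.3 } tops out at $d_p\geq d_H+2=6$, and the sub-cases ``$d_H\geq 7$'' and ``$d_H=6$'' in your final step never occur; the gap from $6$ to $8$ cannot be closed by any bound that depends only on the Hamming distance.

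What is actually needed (and what the cited proof in \cite{Chen-2017-CIT}, like the analogous Proposition \ref{Proposition C_{421}} in this paper, carries out) is a direct exclusion of codewords of symbol-pair weight $6$ and $7$. Writing $\omega_p(c)=\omega_H(c)+(\text{number of runs of consecutive nonzero coordinates})$, one must rule out: Hamming weight $4$ with at most $3$ runs (i.e.\ some two nonzero positions adjacent), Hamming weight $5$ with at most $2$ runs, and Hamming weight $6$ with a single run (the last is immediate from $\deg g=6$). Each surviving support pattern is normalized by a cyclic shift to a form such as $(\star,\star,0_{s_1},\star,\star,0_{s_2})$, and the six evaluation conditions $c(1)=c'(1)=c''(1)=c(\omega)=c'(\omega)=c(\omega^2)=0$ are then shown to be inconsistent by explicit computation, with sub-cases according to the residues of the gap lengths modulo $3$ and modulo $p$. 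Your parity-check setup is the right starting point for that computation, but it must be applied to these clustered support patterns rather than to a generic bound on $d_H$.
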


Researchers constructed some MDS symbol-pair codes with minimum symbol-pair distance 6 by repeated-root cyclic codes. The next lemma will be used by several parts of Theorem \ref{theorem 3.1}.

\begin{lemma} \label{lemma 3.7 }
	\item 
	\begin{itemize}
		\item \cite{Dinh-2018-TIT} $ \mathcal{C}= \left\langle {(x - 1)}^4 \right\rangle $ is an MDS symbol-pair code.	
		\item \cite{Dinh-2020-TIT} ${\mathcal{C}} = \left\langle {{{(x - 1)}^i}{{(x + 1 )}^j}} \right\rangle $ is an MDS symbol-pair code with $d_p=6$ over ${{\mathbb F}_p}$, where ${\left| {i - j} \right| \leqslant 2}$ and $i,j\le p-1$.
		\item \cite{Zhao-2019-Doc} ${\mathcal{C}} = \left\langle {{{(x - 1)}^3}{{(x - \omega )}}} \right\rangle $ is an MDS ${\left( {lp,{\text{\;}}6} \right)_p}$ symbol-pair code over ${{\mathbb F}_p}$, where $\omega$ is a primitive $ l$-th  root of unity in ${{\mathbb F}_p}$.
	\end{itemize}
		
\end{lemma}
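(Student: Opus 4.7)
The plan is to handle all three items uniformly, exploiting the fact that in each case $\deg g = 4$, so the quotient code has dimension $n - 4$ and the symbol-pair Singleton bound of Lemma \ref{lemma 3.2 } gives $d_p \leq n - k + 2 = 6$. The task therefore reduces to proving the matching lower bound $d_p \geq 6$. My main tool is Lemma \ref{lemma 3.3 }: provided the code is not classically MDS, one has $d_p \geq d_H + 2$, so it suffices to establish $d_H = 4$ and to rule out the classical-MDS alternative in each case.

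For the upper bound $d_H \leq 4$, I would exhibit an explicit weight-$4$ codeword. In item (1) the generator $(x-1)^4$ itself has exactly four nonzero coefficients once $p \geq 5$. In item (2), for each admissible pair $(i,j)\in\{(1,3),(2,2),(3,1)\}$ consistent with $i+j=4$ forced by the MDS condition, I would take a short ansatz $a_1 + a_2 x^{s} + a_3 x^{t} + a_4 x^{u}$ and solve the four confluent evaluation conditions imposed by a root of multiplicity $i$ at $1$ and of multiplicity $j$ at $-1$; a judicious choice of exponents makes the resulting $4\times 4$ linear system nonsingular and returns a weight-$4$ word. In item (3), an analogous ansatz must satisfy a triple-root condition at $1$ together with a simple-root condition at $\omega$, and the same small-system approach yields the required codeword, using $\omega \neq 1$ and $\gcd(l,p)=1$.

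For the lower bound $d_H \geq 4$, I would suppose a nonzero codeword of Hamming weight $w \leq 3$ with support $\{i_1,\dots,i_w\}$ and nonzero coefficients $b_1,\dots,b_w$, and translate the divisibility $g(x)\mid c(x)$ into the Hasse-derivative system $\sum_{j} b_j \binom{i_j}{k}\alpha^{\,i_j - k} = 0$ taken over the Hasse orders $k$ at each root $\alpha$ of $g$. This system is governed by a confluent Vandermonde matrix of size $\deg g = 4$, whose full rank forces $w \geq 4$, contradicting $w \leq 3$. Combined with the upper bound, $d_H = 4$; since $n - k = 4 > d_H - 1$, the code fails to be classically MDS, so Lemma \ref{lemma 3.3 } gives $d_p \geq d_H + 2 = 6$. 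Together with the Singleton upper bound, $d_p = 6$, and the dimension identity $k = n-4 = n - d_p + 2$ finishes the MDS claim.

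The hard part will be showing that the confluent Vandermonde matrix has full rank in characteristic $p$, because the binomial coefficients $\binom{i_j}{k} \bmod p$ can vanish by Lucas' theorem and artificially collapse the rank of the Hasse-derivative system. I would circumvent this by reducing, via elementary row operations, to a genuine Vandermonde in the distinct roots (namely $\{1\}$ for item (1), $\{1,-1\}$ for item (2), and $\{1,\omega\}$ for item (3)), using $p$ odd and $\gcd(l,p)=1$ to guarantee that the listed roots are pairwise distinct in $\mathbb{F}_p$. A secondary technical point, relevant to item (2), is to verify that the three admissible $(i,j)$ cases all admit a common uniform construction; if not, a very brief case analysis on $(i,j)$ will close the gap.
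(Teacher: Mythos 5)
There is a genuine gap, and in fact the paper does not prove this lemma at all --- all three items are quoted from \cite{Dinh-2018-TIT}, \cite{Dinh-2020-TIT} and \cite{Zhao-2019-Doc} --- so your proposal must stand on its own. It does not, for two concrete reasons. First, your Hamming-distance computations are wrong in most cases. For item (1), $(x-1)^4=x^4-4x^3+6x^2-4x+1$ has \emph{five} nonzero coefficients when $p\geq 5$, not four; here $d_H=5=n-k+1$, so the code \emph{is} classically MDS, Lemma \ref{lemma 3.3 } is inapplicable, and the correct route is the other inequality $d_p\geq d_H+1=6$. For item (2) with $(i,j)=(2,2)$ the generator $(x^2-1)^2=x^4-2x^2+1$ is itself a codeword of Hamming weight $3$, and for item (3) one gets weight-$3$ codewords of the form $v(x)^p=v(x^p)$ with $v(x)=(x-1)(x-\omega)u(x)$, since $v(x)^p$ is divisible by $(x-1)^p(x-\omega)^p$. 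So $d_H=3$ in these cases, and your key technical claim --- that the $4\times 3$ confluent Vandermonde system forces weight $\geq 4$ --- is simply false in characteristic $p$ (these explicit weight-$3$ codewords are the counterexamples; the rank collapse you worried about really happens when the support lies in $p\Z$ or, for $l=2$, in $2\Z$). Only for $(i,j)=(3,1),(1,3)$ does $d_H=4$ hold and your argument close.

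Second, even where Lemma \ref{lemma 3.3 } applies with $d_H=3$, it yields only $d_p\geq 5$, and the entire remaining (and hardest) step is missing: one must exclude codewords of symbol-pair weight exactly $5$, i.e.\ Hamming-weight-$4$ codewords with four consecutive nonzero coordinates (ruled out by a degree count against $\deg g=4$) and Hamming-weight-$3$ codewords with support pattern $\left( { \star ,\; \star ,\;{0_{s_1}},\; \star ,\;{0_{s_2}}} \right)$, which requires solving the system $c(\alpha)=c^{(1)}(\alpha)=\dots=0$ for $c(x)=1+a_1x+a_2x^t$ at the roots of $g$ and deriving a contradiction. This is precisely the analysis the paper carries out for the neighbouring codes in Propositions \ref{Proposition 3.2}--\ref{Proposition 3.4} (indeed the paper remarks that Lemma \ref{lemma 3.7 } is essentially a special case of Propositions \ref{Proposition 3.3} and \ref{Proposition 3.4}); your proposal never reaches this step because it stops at the (incorrect) claim $d_H=4$.
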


The method for calculating the minimum Hamming distance about repeated-cyclic codes is given in the following lemma.
\begin{lemma}\label{lemma 3.1}
	(\cite{Castagnoli-1991-OIT}) Let $\mathcal{C}$ be a repeated-root cyclic code with length $ lp^{e} $ over ${{\mathbb F}_q}$ generated by $ g(x) = \prod {{m_i}^{{e_i}}\left( x \right)}$ , where $ l $ and $ e $ are positive integers with $ \gcd (l, p) = 1 $. Then we have $${d_H}(\mathcal{C}) = \min \left\{ {{P_t} \cdot {d_H}\left( {{{\mathcal{\overline C}}_t}} \right)\left| {0 \le t \le l{p^e}} \right.} \right\},$$
	where $ {P_t} = {\omega _H}\left( {{{\left( {x - 1} \right)}^t}} \right) $ and $ \mathcal{\overline C}_{t} =\left\langle \prod\limits_{{e_i} > t} {{m_i}\left( x \right)}\right\rangle $. 
\end{lemma}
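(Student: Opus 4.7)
The plan is to split every codeword in two directions: along the ``outer'' length-$l$ component controlled by the simple-root codes $\overline{\mathcal C}_t$, and along the ``inner'' repeated component controlled by powers of $x^l-1$. The bookkeeping device is the identity $x^n-1=(x^l-1)^{p^e}$ together with the substitution $y=x^l$: every polynomial of degree $<n$ writes uniquely as $c(x)=\sum_{j=0}^{l-1}x^j c_j(y)$ with $\deg c_j<p^e$, and because the $l$ summands have pairwise disjoint supports modulo $l$, one has $\omega_H(c)=\sum_{j=0}^{l-1}\omega_H(c_j)$. This decomposition is what converts a weight problem on length $n$ into $l$ weight problems on length $p^e$.

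For the upper bound, I would fix $t\in\{0,1,\dots,p^e-1\}$, take a minimum-weight codeword $\bar c(x)$ of the length-$l$ code $\overline{\mathcal C}_t=\langle\prod_{e_i>t}m_i(x)\rangle$, and form $c(x):=(x^l-1)^t\bar c(x)$, a polynomial of degree $<l(t+1)\le n$. Comparing $m_i$-adic valuations shows $g(x)\mid c(x)$, so $c\in\mathcal C$. Writing $\bar c=\sum_j\bar c_jx^j$, the nonzero positions of $\bar c_j x^j(x^l-1)^t$ sit in the arithmetic progression $j,j+l,j+2l,\dots$, and distinct $j$'s give disjoint progressions, so $\omega_H(c)=P_t\cdot\omega_H(\bar c)=P_t\,d_H(\overline{\mathcal C}_t)$. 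This proves $d_H(\mathcal C)\le\min_t P_t\,d_H(\overline{\mathcal C}_t)$.

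For the lower bound, I would take a nonzero $c\in\mathcal C$ and let $t$ be the largest integer with $(x^l-1)^t\mid c(x)$ in $\mathbb F_q[x]$; since $\deg c<n$, $0\le t\le p^e-1$. Write $c(x)=(x^l-1)^t\tilde c(x)$ and set $\bar c(x):=\tilde c(x)\bmod(x^l-1)$, a nonzero polynomial of degree $<l$. For each $i$ with $e_i>t$, the valuation comparison $v_{m_i}(c)\ge e_i$ and $v_{m_i}((x^l-1)^t)=t$ yields $m_i\mid\tilde c$, hence $m_i\mid\bar c$; so $\bar c\in\overline{\mathcal C}_t$ and $\omega_H(\bar c)\ge d_H(\overline{\mathcal C}_t)$. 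Passing to the $y$-decomposition, if $\tilde c(x)=\sum_j x^j\tilde c_j(y)$, then $c_j(y)=(y-1)^t\tilde c_j(y)$, and the $j$-th coefficient of $\bar c$ is exactly $\tilde c_j(1)$. For every $j$ with $\tilde c_j(1)\ne 0$, $c_j$ is a nonzero element of the length-$p^e$ repeated-root code $\langle(y-1)^t\rangle$, giving $\omega_H(c_j)\ge P_t$. Summing over these indices gives $\omega_H(c)\ge P_t\,\omega_H(\bar c)\ge P_t\,d_H(\overline{\mathcal C}_t)$.

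The main obstacle is the inner inequality $\omega_H(c_j)\ge P_t$, equivalently, the assertion that the length-$p^e$ repeated-root code $\langle(y-1)^t\rangle$ has minimum distance $P_t$. One proves it using the Frobenius identity $(y-1)^{p^i}=y^{p^i}-1$ in characteristic $p$: with the base-$p$ expansion $t=\sum_i t_i p^i$, this yields the factorization $(y-1)^t=\prod_i(y^{p^i}-1)^{t_i}$, and a short induction on the number of base-$p$ digits propagates the weight bound through successive powers of $p$. Once this base-case statement is in hand, every remaining step reduces to $m_i$-adic valuation bookkeeping and the disjoint-support accounting afforded by the substitution $y=x^l$.
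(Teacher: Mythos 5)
Your lower-bound argument is correct and follows the standard route, but the upper bound has a genuine gap: the polynomial $c(x)=(x^l-1)^t\bar c(x)$ need not lie in $\mathcal C$. Membership of $\bar c$ in $\overline{\mathcal C}_t$ only guarantees $v_{m_i}(\bar c)\ge 1$ for each $i$ with $e_i>t$, so the valuation comparison gives $v_{m_i}(c)=t+v_{m_i}(\bar c)$, which can be as small as $t+1<e_i$; the inequality you would need, $v_{m_i}(\bar c)\ge e_i-t$, is simply not available. A minimal counterexample: $p=3$, $l=2$, $n=6$, $g(x)=(x-1)^3$, $t=0$. Here $\overline{\mathcal C}_0=\left\langle x-1\right\rangle$ has minimum-weight codeword $\bar c(x)=x-1$, and your construction yields $c(x)=x-1$, which is not divisible by $(x-1)^3$. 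Since $t=0$ attains the minimum in this example ($P_0\cdot d_H(\overline{\mathcal C}_0)=2$, realized by the codeword $x^3-1$), the failure cannot be brushed aside as affecting only non-optimal values of $t$.

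The standard repair is to inflate the multiplicities before multiplying by $(x^l-1)^t$: take a minimum-weight codeword $\bar c'(z)$ of the Frobenius twist of $\overline{\mathcal C}_t$ (which has the same minimum distance) and set $c(x)=(x^l-1)^t\,\bar c'(x^{p^e})\bmod (x^n-1)$. Then $\bar c'(x^{p^e})$ is divisible by $m_i(x)^{p^e}$ for every $i$ with $e_i>t$, so $v_{m_i}(c)\ge p^e\ge e_i$ and $c\in\mathcal C$, while $\gcd(l,p^e)=1$ guarantees that the exponents $lk+jp^e$ arising in the product are pairwise distinct modulo $n$, so the weight is exactly $P_t\cdot d_H(\overline{\mathcal C}_t)$ (in the example this produces $x^3-1$). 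Two further cautions: the inner claim that $\left\langle (y-1)^t\right\rangle$ of length $p^e$ has minimum distance $P_t$, which you defer to ``a short induction,'' is the real technical core of the lower bound (it is essentially the $l=1$ case of the whole theorem) and needs a complete argument; and the range of $t$ must be handled so that the construction does not degenerate to the zero codeword (e.g.\ $t=p^e$ gives $(x^l-1)^{p^e}\equiv 0$). The paper itself offers no proof to compare against, as it quotes the result from Castagnoli et al.
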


\section{Constructions of MDS and AMDS Symbol-Pair Codes}
In this section, we propose some MDS and AMDS symbol-pair codes by repeated-root cyclic codes over  ${{\mathbb F}_p}$, where $ p $ is an odd prime and ${{\mathbb F}_p}$ is a $p$-ary finite field.

\subsection{MDS and AMDS Symbol-Pair Codes with length $lp$}
In this subsection, we prove that there exist more general generator polynomials about MDS $ {(lp,6)}_p $ and MDS $ {(lp,5)}_p $ symbol-pair codes. Furthermore, the parameter AMDS $ {(lp,7)}_p $  symbol-pair codes are obtained by using repeated-cyclic codes.

Let $\mathcal{C}_{(r_{1},r_{2},r_{3})}$ be the repeated-root cyclic code over ${{\mathbb F}_p}$ and the generator ploynomial of $ \mathcal{C}_{(r_{1},r_{2},r_{3})} $  is
$$g_{(r_{1},r_{2},r_{3})} \left( x \right) = {\left( {x - 1} \right)^{r_1}}\left( {{x} + 1} \right)^{r_2}\left( {{x} - \omega} \right)^{r_3},$$
where $\omega$ is a primitive $ l$-th  root of unity in ${{\mathbb F}_p}$ and ${r_1}+{r_2}+{r_3}=4$. 
Dinh \cite{Dinh-2018-TIT} \cite{Dinh-2020-TIT} discussed all cases of $l=1$ and $l=2$, here we focus on the case of $l > 2$.

\begin{theorem} \label{theorem 3.1}
Let	$\mathcal{C}_{(r_{1},r_{2},r_{3})}$ be an MDS symbol-pair code with $d_p=6$, if ${r_1}$, ${r_2}$ and ${r_3}$ meet the conditions in Table \ref{table-1}.
\end{theorem}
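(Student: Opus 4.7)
The plan is to reduce the statement to a Hamming-distance estimate and then apply Castagnoli's formula. Since $\deg g_{(r_1,r_2,r_3)} = r_1+r_2+r_3 = 4$, the code $\mathcal{C}_{(r_1,r_2,r_3)}$ has dimension $lp-4$ and cardinality $p^{lp-4}$, so the Singleton bound (Lemma~\ref{lemma 3.2 }) immediately gives $d_p \le lp-(lp-4)+2 = 6$. The size already matches the MDS profile, and only the matching lower bound $d_p(\mathcal{C}_{(r_1,r_2,r_3)}) \ge 6$ remains to be proved.

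For this lower bound I would show $d_H(\mathcal{C}_{(r_1,r_2,r_3)}) \ge 4$ under the hypotheses of Table~\ref{table-1}. Once this holds the conclusion is automatic: if $d_H = 5$ the general inequality $d_H + 1 \le d_p$ already gives $d_p \ge 6$, while if $d_H = 4$ then the code strictly violates the classical Singleton bound $d_H \le n-k+1 = 5$ and is therefore not a classical MDS code, so Lemma~\ref{lemma 3.3 } yields $d_p \ge d_H + 2 = 6$. Either way we land on $d_p = 6$.

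The core of the argument is Lemma~\ref{lemma 3.1}. Writing $g_{(r_1,r_2,r_3)}(x) = \prod_{i=1}^{3} m_i(x)^{r_i}$ with $m_1 = x-1$, $m_2 = x+1$, $m_3 = x-\omega$, I would evaluate $P_t\cdot d_H(\overline{\mathcal{C}}_t)$ for $0\le t\le 3$. Since all exponents in play lie below $p$, one has $(P_0,P_1,P_2,P_3) = (1,2,3,4)$, and $\overline{\mathcal{C}}_t$ is the simple-root cyclic code of length $l$ generated by $\prod_{r_i>t}m_i(x)$. The required bounds $P_t\cdot d_H(\overline{\mathcal{C}}_t) \ge 4$ translate to $d_H(\overline{\mathcal{C}}_t) \ge 1,\,2,\,2,\,4$ at $t = 3,\,2,\,1,\,0$, together with the implicit requirement that $-1$ be an $l$-th root of unity whenever $r_2>0$. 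The first three of these inequalities are immediate once the relevant subset of $\{1,-1,\omega\}$ contains at least two elements.

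The main obstacle is therefore the $t=0$ bound: for each triple in Table~\ref{table-1} one has to verify that the length-$l$ simple-root cyclic code whose zero set is exactly $\{m_i : r_i>0\}$ still has minimum Hamming distance at least $4$. This is precisely a question about the existence of low-weight codewords in a length-$l$ cyclic code with at most three prescribed zeros among $\{1,-1,\omega\}$, and it is the arithmetic restrictions on $l$ (notably the parity of $l$ when $r_2>0$, and avoidance of sporadic short codewords for small $l$) that constitute the content of Table~\ref{table-1}. I expect that the case analysis naturally splits according to which of $r_1,r_2,r_3$ vanish and that the hardest subcase is $(r_1,r_2,r_3)=(2,1,1)$ and its permutations, where $\overline{\mathcal{C}}_0$ has three prescribed zeros and the distance-$4$ check is the tightest.
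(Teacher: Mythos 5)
Your reduction collapses at its central step: the claim that $d_H(\mathcal{C}_{(r_1,r_2,r_3)})\ge 4$ is false for the new cases in Table~\ref{table-1}. Apply Lemma~\ref{lemma 3.1} correctly at $t=2$: for $(r_1,r_2,r_3)=(2,1,1)$ or $(2,0,2)$ no exponent exceeds $2$, so $\overline{\mathcal{C}}_2=\langle 1\rangle$ is the whole space with $d_H(\overline{\mathcal{C}}_2)=1$, and $P_2\cdot d_H(\overline{\mathcal{C}}_2)=3$. (You slipped here by reading the condition $e_i>t$ as $e_i>0$; $\overline{\mathcal{C}}_t$ keeps only the factors with exponent \emph{strictly greater than} $t$, so your ``$d_H(\overline{\mathcal{C}}_2)\ge 2$'' requirement already fails.) Hence $d_H=3$, exactly as the paper computes in Proposition~\ref{Proposition 3.2}, and Lemma~\ref{lemma 3.3 } only delivers $d_p\ge d_H+2=5$. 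There is no way to reach $d_p\ge 6$ through a pure Hamming-distance estimate for these generator polynomials.

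The missing content — which is in fact the bulk of the paper's argument in Propositions~\ref{Proposition 3.2}--\ref{Proposition 3.4} — is the direct elimination of codewords of symbol-pair weight $5$. Concretely, one must show (i) no codeword of Hamming weight $4$ has its four nonzero coordinates consecutive (this is a degree count: such a $c(x)$ would have degree $3<4=\deg g$ after a cyclic shift), and (ii) no codeword of Hamming weight $3$ has support pattern $(\star,\star,0_{s_1},\star,0_{s_2})$; the latter is ruled out by writing $c(x)=1+a_1x+a_2x^t$ and deriving contradictions from the vanishing of $c$ and its Hasse/formal derivatives at $1,-1,\omega$, with a case split on the residue of $t$. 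Your proposal contains none of this analysis, and the step it replaces it with ($d_H\ge 4$) is provably wrong, so the argument does not establish the theorem.
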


\begin{table}[h]\label{table1}
	\centering
	\begin{threeparttable}[b]
		\caption{MDS symbol-pair codes of Theorem \ref{theorem 3.1}}\label{table-1}
		\begin{tabular}{@{}lllll@{}}			
			\toprule
			
			$r_1$ & $r_2$ & $r_3$ & \makecell{$(n,d_{p})_{p}$} & Reference or Proposition\\
			\midrule	
			4 & 0 & 0 & \makecell{$(p,6)_{p}$}&Reference \cite{Dinh-2018-TIT}\\
			\midrule 
			3 & 0 & 1 & \makecell{$(lp,6)_{p}$}&Reference \cite{Zhao-2019-Doc} \\
			\midrule
			2 & 1 & 1 & \makecell{$(klp,6)_{p}$}&Proposition \ref{Proposition 3.2}\\ 					
			\midrule
			2 & 0 & 2 & \makecell{$(lp,6)_{p}$ } &Proposition \ref{Proposition 3.4}\\
			\midrule
			
		\end{tabular}	
		\begin{tablenotes}
			\item * {\footnotesize When $l$ even, $(klp,6)_{p}=(lp,6)_{p}$; When $l$ odd, $(klp,6)_{p}=(2lp,6)_{p}$.}
			
			\item * {\footnotesize Let $l$ be an odd number or $l \equiv 0\left( {\bmod~4} \right)$, if $(r_{1},r_{1},r_{1}) \in S$ and $S=\{(0,1,3),(0,2,2),(0,3,1)\}$.}
		\end{tablenotes}
	\end{threeparttable}
\end{table}

The proof of Theorem \ref{theorem 3.1} needs Propositions \ref{Proposition 3.2} to Proposition \ref{Proposition 3.4}. For the case of generator polynomials with three factors ${\left( {x - 1} \right)},\left( {{x} + 1} \right)$ and $\left( {{x} - \omega} \right)$, we have the following proposition.

\begin{Proposition}\label{Proposition 3.2}
Let $ \mathcal{C}_{(r_{1},r_{2},r_{3})} $ be an MDS symbol-pair code with $d_p=6$, if $\left( {r_1},{r_2},{r_3}\right) \in S_{1}$ and $S_{1}=\{\left(2,1,1 \right), \left(1,2,1 \right), \left(1,1,2 \right)\} $.

\end{Proposition}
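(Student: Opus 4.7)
The plan is to establish $d_p(\mathcal{C}_{(r_1,r_2,r_3)}) = 6$ for each of the three triples in $S_1$ by a direct support-pattern analysis. Since the code has dimension $n - 4$ (with $n = lp$ when $l$ is even and $n = 2lp$ when $l$ is odd), Lemma \ref{lemma 3.2 } already furnishes $d_p \leq 6$, and only the lower bound $d_p \geq 6$ remains. The three triples are handled by the same scheme, the only change being which of the roots $1,-1,\omega$ carries the extra multiplicity and hence supplies the derivative condition, so I concentrate on $(r_1,r_2,r_3) = (2,1,1)$ and indicate the variations at the end.

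My starting point is the identity $\omega_p(c) = \omega_H(c) + m(c)$, where $m(c)$ is the number of maximal cyclic runs of consecutive non-zero entries of $c$. Consequently $\omega_p(c) \leq 5$ restricts the support of $c$, after a cyclic shift, to one of: Hamming weight at most $2$; weight $3$ in the consecutive positions $\{0,1,2\}$; weight $3$ split as $\{0,1,j\}$ with $3 \leq j \leq n-2$; or weight $4$ in $\{0,1,2,3\}$. Membership of $c$ in $\mathcal{C}_{(2,1,1)}$ is equivalent to the four scalar conditions $c(1) = c'(1) = c(-1) = c(\omega) = 0$.

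The small-weight and all-consecutive patterns fall to elementary manipulations of these four equations: weight $1$ fails $c(1) = 0$; a weight-$2$ codeword $\alpha x^i + \beta x^j$ forces $i \equiv j$ modulo each of $2$, $l$, $p$, which by $\gcd(l,p) = 1$ gives $i = j$ in the ambient length, a contradiction; for support $\{0,1,2\}$ the conditions $c(1) = c(-1) = 0$ force the middle coefficient to vanish; and for support $\{0,1,2,3\}$ the four conditions pin $c(x)$ down to a scalar multiple of $(x-1)^2(x+1)$, so $c(\omega) = 0$ becomes $(1-\omega)^2(1+\omega) = 0$, impossible since $\omega \neq \pm 1$ when $l > 2$.

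The main obstacle is the split support $\{0,1,j\}$. Writing $c(x) = \alpha + \beta x + \gamma x^j$ with $\alpha,\beta,\gamma \neq 0$, I use $c(1) = 0$ and $c'(1) = \beta + j\gamma = 0$ to express $\alpha = (j-1)\gamma$ and $\beta = -j\gamma$, and then $c(-1) = 0$ reduces to the congruence $2j - 1 + (-1)^j \equiv 0 \pmod p$, so either $j$ is even with $p \mid j$, or $j$ is odd with $j \equiv 1 \pmod p$. Feeding the same expressions into $c(\omega) = 0$ and reducing the integer coefficients modulo $p$ collapses the equation to $\omega^j = 1$ or $\omega^j = \omega$ respectively, so $l \mid j$ or $l \mid j - 1$; combined with $\gcd(l,p) = 1$ this forces $lp \mid j$ or $lp \mid j - 1$, and a brief parity check on $l$ eliminates every such $j$ in the allowed range $3 \leq j \leq n - 2$. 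Hence $d_p \geq 6$. The triples $(1,2,1)$ and $(1,1,2)$ are handled identically, with the derivative condition shifted to $c'(-1) = 0$ or $c'(\omega) = 0$; the weight-$4$ pattern then pins $c(x)$ down to a scalar multiple of $(x+1)^2(x-1)$ or a suitable analogue, and the split case produces the same kind of divisibility contradiction, so non-degeneracy still rests on $\omega \neq \pm 1$ for $l > 2$.
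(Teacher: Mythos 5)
Your argument is correct and rests on the same basic strategy as the paper's proof: enumerate the cyclic support patterns compatible with $\omega_p\le 5$ and kill each one with the root conditions $c(1)=c'(1)=c(-1)=c(\omega)=0$. Several of the sub-arguments, however, are genuinely different. The paper first computes $d_H=3$ from Castagnoli's formula and then invokes the bound $d_p\ge d_H+2$ for non-MDS codes, so it only has to exclude $\omega_p=5$ (weight $4$ consecutive and weight $3$ split into two runs); you dispose of weights $1$, $2$ and the consecutive weight-$3$ pattern by hand, which costs a little more writing but makes the proof self-contained. In the decisive split pattern $c=\alpha+\beta x+\gamma x^{j}$ the paper gets an immediate contradiction from $c(1)=c(-1)=0$ alone ($2\beta=0$ when $j$ is even, $2\alpha=0$ when $j$ is odd), whereas you route through $c'(1)=0$ and $c(\omega)=0$ to force $lp\mid j$ or $lp\mid j-1$ and then exclude these by range and parity; your chain is sound (including the $l$ odd cases $j=lp$ and $j-1=lp$, which you correctly eliminate by parity), but note that your own intermediate conclusion $p\mid j$ already gives $\beta=-j\gamma=0$, so the detour through $\omega$ is unnecessary. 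Finally, for $(1,2,1)$ and $(1,1,2)$ the paper substitutes $y=-x$ and $z=x/\omega$ to identify the three codes up to a diagonal, symbol-pair-weight-preserving equivalence, while you re-run the case analysis with the derivative condition relocated; that works too, and in fact the split pattern there dies even faster, since $c(1)=c(-1)=0$ is retained in both variants, so the ``same kind of divisibility contradiction'' you promise is more machinery than those cases actually require.
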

\begin{proof}
When	$\left( {r_1},{r_2},{r_3}\right) =\left(2,1,1 \right) $, let $\mathcal{C}_{(2,1,1)}$ be the cyclic code over ${{\mathbb F}_p}$ and generated by $$g_{(2,1,1)}\left( x \right) = {\left( {x - 1} \right)^2}\left( {{x} + 1} \right)\left( {{x} - \omega} \right).$$

\noindent	By   Lemma \ref{lemma 3.1} , let ${\overline g}_{t}(x) $ be the generator ploynomials of $ \mathcal{\overline C}_{t} $.

\begin{itemize}
	
	\item \noindent	If $ t=0 $, then we have $$ {\overline g}_{0}(x)=(x-1)(x+1)(x-\omega) .$$ 
	It is easy to verify that the minimum Hamming diatance is $3$ in $ \mathcal{\overline C}_{0} $ and $ P_{0}=1 $. Therefore, this indicates $ P_{0} \cdot d_{H}(\mathcal{\overline C}_{0})=3.$
	
	\item \noindent If $ t=1 $, then we have ${\overline g}_{1}(x)=(x-1) $ and $ P_{1}=2 $. Thus, one can derive that ${P_1} \cdot {d_H}({{\mathcal{\overline C}}_1}) = 4.$
	
	\item \noindent If $2 \le t \le p-1$, then we have ${\overline g}_{t}(x)=1 $ and $ P_{t} \ge 2 $. This implies that ${P_t} \cdot {d_H}({{\mathcal{\overline C}}_t}) \ge 3.$
	
	Therefore, it can be verified that $\mathcal{C}$ is an $\left[ {lp,\;lp - 4,\;3} \right]$ repeated-root cyclic code over ${{\mathbb F}_p}$. Lemma \ref{lemma 3.3 } yields that ${d_p} \geqslant 5$, since $\mathcal{C}$ is not an MDS cyclic code.
	
\end{itemize}

	
	If $c \in \mathcal{ C}_{(2,1,1)}$ has $ \omega_p =5$ with  $\omega_H =4 $,
	then its certain cyclic shift must have the form
	$$ \left( { \star ,\; \star ,\; \star ,\; \star ,{0_s}} \right),$$
	where each $ \star $ denotes an element in  ${\mathbb F}_p^{\text{*}}$ and ${0_s}$ is all-zero vector with length $s$. Without loss of generality, suppose that the constant term of $c\left( x \right)$ is 1. We denote that
	$$c\left( x \right) = 1 + {a_1}x + {a_2}{x^2} + {a_3}{x^3} .$$
	This is a contradiction for  $\deg\left( {c\left( x \right)} \right)  \ge   \deg\left( {g_{(2,1,1)}\left( x \right)} \right)$.
	
	 If $c \in \mathcal{ C}_{(2,1,1)}$ has  $ \omega_p =5$ with $\omega_H =3 $, then its certain cyclic shift must have the form
	$$\left( { \star ,\; \star ,\;{0_{s_1}},\; \star ,\;{0_{s_2}}} \right),$$
	where each $ \star $ denotes an element in ${\mathbb F}_p^{\text{*}}$ and ${0_{s_1}}$, ${0_{s_2}}$ are all-zero vectors with lengths $s_1$ and $s_2$ respectively. Without loss of generality, suppose that the constant term of $c\left( x \right)$ is 1. We denote that
	$$c\left( x \right) = 1 + {a_1}x + {a_2}{x^t}.$$
	
	When $ t $ is even, it can be verified that
	\begin{equation*}  
		\left\{  
		\begin{array}{lr}  
			1 + {a_1} + {a_2}  = 0, &  \\  
			1 - {a_1} + {a_2}= 0, &  
		\end{array}  
		\right.  
	\end{equation*}
	since $c\left( 1 \right) = c\left( { - 1} \right)  = 0$. 
	This is impossible, since $ {a_1} \ne 0 $ and $p\ne 2$.
	
	Similarly, if $ t $ is odd, one can obtain that
		\begin{equation*}  
		\left\{  
		\begin{array}{lr}  
			1 + {a_1} + {a_2}  = 0, &  \\  
			1 - {a_1} - {a_2}= 0, &  
		\end{array}  
		\right.  
	\end{equation*}
	 which contradicts $ p $ odd.


	Let $y = -x,z = \frac{x}{\omega }$, we can deduce the following results by deforming it,
\[\begin{gathered}
	{g_{(2,1,1)}}(x) = (x - 1)^{2}(x + 1)(x - {\omega}) \hfill \\
\;\;\;\;\;\;\;\;\;\;\;\;\;\;\;\;\;\;\,=(y + 1)^{2}(y - 1)(y + {\omega}) \hfill \\
\;\;\;\;\;\;\;\;\;\;\;\;\;\;\;\;\;\;\,= {g_{(1,2,1)}}(y) \hfill \\
\;\;\;\;\;\;\;\;\;\;\;\;\;\;\;\;\;\;\,= \omega ^4 {(z - \frac{1}{{{\omega }}})^2}{(z + \frac{\omega }{{{\omega }}})}{(z - \frac{{{\omega }}}{{{\omega }}})} \hfill \\
\;\;\;\;\;\;\;\;\;\;\;\;\;\;\;\;\;\;\,=\omega^4{g_{(1,1,2)}}(z).\hfill \\
\end{gathered} \]
Thus, we have $\mathcal{C}_{(2,1,1)}=\mathcal{C}_{(1,2,1)}=\mathcal{C}_{(1,1,2)}$.

\end{proof}

We have the following two propositions, when the generator polynomials only have two of these three factors  ${\left( {x - 1} \right)},\left( {{x} + 1} \right)$ and $\left( {{x} - \omega} \right)$.

\begin{Proposition} \label{Proposition 3.3}

Let $ \mathcal{C}_{(r_{1},r_{2},r_{3})} $ be an MDS symbol-pair code with $d_p=6$, if  $\left( {r_1},{r_2},{r_3}\right) \in S_{2}$ and $ S_{2}=\{\left(1,0,3 \right) ,\left(0,3,1 \right), \left(0,1,3 \right)\}$.
\end{Proposition}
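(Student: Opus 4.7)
The plan is to reduce each of the three cases in $S_{2}=\{(1,0,3),(0,3,1),(0,1,3)\}$ to the MDS code $\mathcal{C}_{(3,0,1)}=\langle (x-1)^{3}(x-\omega)\rangle$ of Lemma~\ref{lemma 3.7 } via monomial substitutions of the variable. For any $\alpha\in{\mathbb F}_p^{*}$ with $\alpha^{lp}=1$, the map $x\mapsto\alpha x$ is a ring automorphism of ${\mathbb F}_p[x]/\langle x^{lp}-1\rangle$, and it multiplies the $i$-th coordinate of every codeword by the nonzero scalar $\alpha^{i}$. A pair $(c_{i},c_{i+1})$ vanishes if and only if the scaled pair $(\alpha^{i}c_{i},\alpha^{i+1}c_{i+1})$ vanishes, so both the Hamming weight and the symbol-pair weight are preserved; in particular $d_{p}$ is an invariant of the substitution. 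Hence it is enough to exhibit, in each case, such a substitution sending the given generator (up to a unit) to the generator of $\mathcal{C}_{(3,0,1)}$ for some primitive $l$-th root of unity.

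For $(1,0,3)$ I would apply $x=\omega z$, which is legal because $\omega^{lp}=(\omega^{l})^{p}=1$, and compute
\[
g_{(1,0,3)}(\omega z)=(\omega z-1)(\omega z-\omega)^{3}=\omega^{4}\,(z-1)^{3}(z-\omega^{-1}).
\]
Since $\omega^{-1}$ is again a primitive $l$-th root of unity, Lemma~\ref{lemma 3.7 } applies with $\omega^{-1}$ in place of $\omega$ and gives $d_p=6$. For $(0,3,1)$ and $(0,1,3)$, the presence of $(x+1)$ forces $l$ to be even so that $(-1)^{lp}=1$ and $x=-y$ is a legal substitution; under the additional hypothesis $l\equiv 0\pmod 4$ (listed in the footnote of Table~\ref{table-1}), $\omega':=-\omega$ is again a primitive $l$-th root of unity. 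A direct computation yields
\[
g_{(0,3,1)}(-y)=(y-1)^{3}(y-\omega'),\qquad g_{(0,1,3)}(-y)=(y-1)(y-\omega')^{3},
\]
the first being the generator of $\mathcal{C}_{(3,0,1)}$ (with $\omega'$) and the second being $g_{(1,0,3)}(y)$ (with $\omega'$), which has already been handled above.

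There is no serious technical obstacle in the argument, but two small verifications deserve emphasis. One must check that each substitution $x\mapsto\alpha x$ truly descends to the quotient ring, i.e.\ that $\alpha^{lp}=1$; for $\alpha=\omega$ this follows from $\omega^{l}=1$, and for $\alpha=-1$ it is equivalent to the evenness of $l$. One must also check that the image generator has the form required for Lemma~\ref{lemma 3.7 } to apply, which in the $\alpha=-1$ case amounts to $-\omega$ being a primitive $l$-th root of unity; this is precisely why the condition $l\equiv 0\pmod 4$ appears, since for $l\equiv 2\pmod 4$ one has $-\omega=\omega^{l/2+1}$ with order only $l/2$. Once these checks are in place, Proposition~\ref{Proposition 3.3} is an immediate consequence of Lemma~\ref{lemma 3.7 }, avoiding any direct appeal to Lemma~\ref{lemma 3.1} or the low-weight codeword enumeration used in Proposition~\ref{Proposition 3.2}.
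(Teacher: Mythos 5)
Your argument is correct and takes a genuinely different route from the paper's for the main case. The paper proves $(1,0,3)$ from scratch: it computes $d_H=3$ via Lemma \ref{lemma 3.1}, gets $d_p\ge 5$ from Lemma \ref{lemma 3.3 }, and then rules out symbol-pair weight $5$ by a degree argument for Hamming weight $4$ and by solving the system coming from $c(\omega)=c^{(1)}(\omega)=c^{(2)}(\omega)=0$ for a putative weight-$3$ codeword $1+a_1x+a_2x^t$. You instead observe that the substitution $x\mapsto\omega z$ is a support-preserving (hence symbol-pair-weight-preserving) equivalence carrying $\left\langle(x-1)(x-\omega)^3\right\rangle$ onto $\left\langle(z-1)^3(z-\omega^{-1})\right\rangle$, which is exactly the code of the third bullet of Lemma \ref{lemma 3.7 }; this is shorter and avoids the codeword enumeration entirely, at the price of importing Zhao's result rather than reproving it. For the remaining two triples both you and the paper use the substitution $x\mapsto -y$; you are in fact more careful than the paper here, since its bald claim $\mathcal{C}_{(1,0,3)}=\mathcal{C}_{(0,3,1)}=\mathcal{C}_{(0,1,3)}$ silently replaces $\omega$ by $-\omega$, whereas you isolate the condition $l\equiv 0\pmod 4$ under which $-\omega$ remains a primitive $l$-th root of unity, consistent with Remark \ref{remark dH=2}.

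The one thing to repair: you assert that the factor $x+1$ ``forces $l$ to be even,'' but the footnote to Table \ref{table-1} also allows $l$ odd, in which case (Remark \ref{remark 3.5}) the relevant code length is $2lp$ and $x+1$ divides $x^{2lp}-1$. Your framework covers this with no new ideas: for $l$ odd, $-\omega$ has order $2l$ since $\gcd(2,l)=1$, so $x\mapsto -y$ is legal modulo $x^{2lp}-1$ and sends $g_{(0,3,1)}$ to $(y-1)^3(y-\omega')$ with $\omega'$ a primitive $2l$-th root of unity, so Lemma \ref{lemma 3.7 } applies with $2l$ in place of $l$. Adding that sentence makes the proof complete over the full parameter range the paper claims.
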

\begin{proof}
When $\left( {r_1},{r_2},{r_3}\right) =\left(1,0,3 \right) $, let $\mathcal{C}_{\left(1,0,3 \right)}$ be a repeated-root cyclic code over ${{\mathbb F}_p}$ generated by $$g_{\left(1,0,3 \right)}\left( x \right) = {\left( {x - 1} \right)}\left( {{x} - \omega} \right)^3.$$

	\noindent By Lemma \ref{lemma 3.1},  we can derive $d_H=3$ and Lemma \ref{lemma 3.3 } implies that $d_p\ge 5$.
With arguments similar to the proof of Proposition \ref{Proposition 3.2}, there are no codewords of $\mathcal{C}_{\left(1,0,3 \right)}$ with  $\omega_H=4$ such that the 4 nonzero terms appear with consecutive coordinates. Next we show that there are no codewords of $\mathcal{C}_{\left(1,0,3 \right)}$ with $\omega_H=3$ in the form
	$$\left( { \star ,\; \star ,\;{0_{s_1}},\; \star ,\;{0_{s_2}}} \right),$$
	where each $ \star $ denotes an element in ${\mathbb F}_p^{\text{*}}$ and ${0_{s_1}}$, ${0_{s_2}}$ are all-zero vectors with lengths $s_1$ and $s_2$ respectively. Without loss of generality, suppose that the constant term of $c\left( x \right)$ is 1. We denote that
	$$c\left( x \right) = 1 + {a_1}x + {a_2}{x^t}.$$
	Then $c^{(1)}\left( \omega \right) = c^{(2)}\left( \omega \right) =0$ induces that $t-1=kp$ for some positive integers $k\le l-2$, together with $c\left( { \omega} \right) =0$, one can immediately get
	\begin{equation*}  
		\left\{  
		\begin{array}{lr}  
			1 + {a_1}\omega + {a_2}\omega^{k+1}  = 0, &  \\  
			{a_1} + {a_2}\omega^k= 0. &  
		\end{array}  
		\right.  
	\end{equation*}
	By solving the system, we can derive a contradiction, since $p$ is an odd prime.

For the case of $\left( {r_1},{r_2},{r_3}\right) \in S_{2}$, with similar to the proof of Proposition \ref{Proposition 3.2}, we can deduce $\mathcal{C}_{\left(1,0,3 \right)}=\mathcal{C}_{\left(0,3,1 \right)}=\mathcal{C}_{\left(0,1,3 \right)}$.

\end{proof}

\begin{Proposition} \label{Proposition 3.4}
$ \mathcal{C}_{\left( {r_1},{r_2},{r_3}\right)} $ is an MDS symbol-pair code with $d_p=6$, when $\left( {r_1},{r_2},{r_3}\right) \in S_{3}$ and $S_{3} =\{\left(2,0,2 \right), \left(0,2,2 \right) \}$.		
\end{Proposition}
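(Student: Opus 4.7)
The plan is to follow the template of Propositions~\ref{Proposition 3.2} and \ref{Proposition 3.3}: compute $d_H$ via Lemma~\ref{lemma 3.1}, lift to $d_p\ge d_H+2$ via Lemma~\ref{lemma 3.3 }, and exclude the two patterns that realize $\omega_p=5$. I focus on $(r_1,r_2,r_3)=(2,0,2)$ with generator $g_{(2,0,2)}(x)=(x-1)^2(x-\omega)^2$; the case $(0,2,2)$ (which requires $l$ even so that $(x+1)\mid x^{lp}-1$) is handled by an analogous computation with $c(-1),c'(-1)$ in place of $c(1),c'(1)$ together with a split on the parity of $t$, the final contradiction becoming $\omega=-1$ (which also fails when $l>2$).

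Applying Lemma~\ref{lemma 3.1}: for $t=0,1$ the auxiliary generator $\overline g_t(x)=(x-1)(x-\omega)$ defines a length-$l$ cyclic code whose two consecutive zeros $1,\omega$ give $d_H(\overline{\mathcal{C}}_t)\ge 3$ by the BCH bound, with equality since $\overline g_t$ itself has weight $3$ (using that $l>2$ forces $\omega\ne -1$); for $t\ge 2$ the auxiliary generator collapses to $1$ and each product $P_t\cdot d_H(\overline{\mathcal{C}}_t)\ge 3$. Thus $\mathcal{C}_{(2,0,2)}$ is an $[lp,\,lp-4,\,3]$ code, and since it is not Hamming-MDS, Lemma~\ref{lemma 3.3 } gives $d_p\ge 5$.

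The crux is excluding $\omega_p=5$. Such a codeword, up to cyclic shift, has either $\omega_H=4$ with four consecutive nonzero coordinates, or $\omega_H=3$ with a $2+1$ pattern. The first normalizes to $c(x)=1+a_1x+a_2x^2+a_3x^3$ and is ruled out by $\deg c=3<4=\deg g_{(2,0,2)}$. The second normalizes to $c(x)=1+a_1x+a_2x^t$ with $a_1,a_2\in\mathbb{F}_p^{*}$ and $3\le t\le lp-2$; the conditions $c(1)=c'(1)=0$ force $t\not\equiv 0,1\pmod p$ and solve uniquely to $a_1=-t/(t-1),\,a_2=1/(t-1)$, so $a_1+a_2=-1$. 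Substituting into $c'(\omega)=0$ collapses it to $\omega^{t-1}=1$, hence $\omega^t=\omega$; substituting further into $c(\omega)=0$ collapses to $1+(a_1+a_2)\omega=1-\omega=0$, contradicting $\omega\ne 1$. Combining with the previous paragraph gives $d_p\ge 6$; since $\dim\mathcal{C}_{(2,0,2)}=lp-4$, the Singleton bound (Lemma~\ref{lemma 3.2 }) forces $d_p=6$ and hence MDS. The main obstacle is this $\omega_H=3$ subcase: one must run the elimination so that the two equations at the double root $\omega$ propagate all the way to the forbidden identity $\omega=1$, the extra leverage coming from the multiplicity-two at $\omega$, which is precisely what distinguishes $(2,0,2)$ from the $(3,0,1)$ analysis of \cite{Zhao-2019-Doc}.
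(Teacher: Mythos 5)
Your proof is correct and follows essentially the same route as the paper's: Lemma 3.1 gives $d_H=3$, Lemma 3.3 lifts this to $d_p\ge 5$, the consecutive $\omega_H=4$ pattern dies by degree, and the $\omega_H=3$ pattern $c(x)=1+a_1x+a_2x^t$ is eliminated by combining the vanishing of $c$ and $c'$ at the two double roots to force $\omega^{t-1}=1$ and then the contradiction $\omega=1$. The only (immaterial) divergence is that for $(0,2,2)$ the paper invokes the substitution equivalence $\mathcal{C}_{(2,0,2)}=\mathcal{C}_{(0,2,2)}$ as in Proposition 3.2, whereas you redo the computation at $-1$ with a parity split on $t$; both are valid.
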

\begin{proof}

When 	$\left( {r_1},{r_2},{r_3}\right) =\left(2,0,2 \right) $, let $\mathcal{C}_{\left(2,0,2 \right)}$ be a cyclic code over ${{\mathbb F}_p}$ generated by $$g_{\left(2,0,2 \right)}\left( x \right) = {\left( {x - 1} \right)^2}\left( {{x} - \omega} \right)^2.$$ 

\noindent By Lemma \ref{lemma 3.1}, we can derive that the minimum Hamming distance of $ \mathcal{C}_{\left(2,0,2 \right)} $ is $d_H=3$ and Lemma \ref{lemma 3.3 } implies $d_p\ge 5$. With arguments similar to the proof of Proposition \ref{Proposition 3.2}, there are no codewords of $\mathcal{C}_{\left(2,0,2 \right)}$ with $\omega_H=4$ such that the 4 nonzero terms appear with consecutive coordinates. 
Next, we show that there are no codewords of $\mathcal{C}_{\left(2,0,2 \right)}$ with $\omega_H=3$ in the form
	$$ \left( { \star ,\; \star ,\;{0_{s_1}},\; \star ,\;{0_{s_2}}} \right),$$
	where each $ \star $ denotes an element in ${\mathbb F}_p^{\text{*}}$ and ${0_{s_1}}$, ${0_{s_2}}$ are all-zero vectors  with lengths $s_1$ and $s_2$ respectively. Without loss of generality, suppose that the constant term of $c\left( x \right)$ is 1. We denote that
	$$c\left( x \right) = 1 + {a_1}x + {a_2}{x^t}.$$
	However, by $c^{(1)}\left( 1 \right)= c^{(1)}\left( \omega \right) = 0$, we can deduce that $l$ is a divisor of $t-1$. Then combined with $c\left( 1 \right)= c\left( \omega \right) = 0$, we have 
	\begin{equation*}  
		\left\{  
		\begin{array}{lr}  
			1 + {a_1} + {a_2}  = 0, &  \\  
			1 + {a_1}\omega + {a_2}\omega= 0. &  
		\end{array}  
		\right.  
	\end{equation*}	
   This implies $\omega=1$, which is impossible, since $\omega^l=1$ and $l>2$.


When $\left( {r_1},{r_2},{r_3}\right) =\left(0,2,2 \right)$, with similar to the proof of Proposition \ref{Proposition 3.2}, we can deduce $\mathcal{C}_{\left(2,0,2 \right)}=\mathcal{C}_{\left(0,2,2 \right)}$.

\end{proof}

This completes the proof of Theorem \ref{theorem 3.1} from Proposition \ref{Proposition 3.2} to Proposition \ref{Proposition 3.4}, Remark \ref{remark 3.5} to Remark \ref{remark dH=2} and Lemma \ref{lemma 3.7 }. Therefore, we find all MDS symbol-pair codes containing these three factors ${\left( {x - 1} \right)}$, $\left( {{x} + 1} \right)$ and $\left( {{x} - \omega} \right)$ with minimum symbol-pair distance 6.
In fact, Lemma \ref{lemma 3.7 } is a special form of Proposition \ref{Proposition 3.3} and Proposition \ref{Proposition 3.4} for $d_{p}=6$, where $\omega$ is a $p-1\over 2$-th primitive element in ${{\mathbb F}_p}$.

\begin{remark} \label{remark 3.5}
 When $l$ even, factor ${\left( {x + 1} \right)}$ is a divisor of $x^{l}-1$; when $l$ odd, factor ${\left( {x + 1} \right)}$ is a divisor of $x^{2l}-1$.
\end{remark}	
\begin{remark} \label{remark dH=2}		
\noindent If $l \equiv 2\left( {\bmod~4} \right)$ and $(r_{1},r_{1},r_{1}) \in S$ for $S=\{(0,1,3),(0,2,2),(0,3,1)\}$, the minimum Hamming distance of $\mathcal{C}_{(r_{1},r_{2},r_{3})}$ is 2.
\end{remark}

In what follows, we obtain more general generator polynomials for symbol-pair codes with length $n = lp$  and minimum symbol-pair distance 5 or 6.

If $m_{1}$ and $m_2$ are two positive integers, then $\lcm [m_1,m_2] $ is the lowest common multiple of $m_1$ and $m_2$, as well as $\gcd(m_1,m_2) $ is the greatest common divisor of $m_1$ and $m_2$.
Let $\mathcal{C}_{a}$ and $ \mathcal{C}_{b} $ be the repeated-root cyclic codes over ${{\mathbb F}_p}$ and the generator ploynomial of $ \mathcal{C}_{a} $ and $ \mathcal{C}_{b} $ are
$$g_{a}\left( x \right) = \left( {{x} - {\omega_{0}}^{t_1}} \right)^{r_1}\left( {{x} - {\omega_{0}}^{t_2}} \right)^{r_2}$$ 
and 
$$g_{b}\left( x \right) = \left( x - 1 \right)^{2}\left( x - {\omega_{0}}^{t_1} \right)\left(x - {\omega_{0}}^{t_2}\right)$$ 
respectively,  
where ${t_1}\ge{t_2},$ $\ord({\omega_{0}}^{t_1})={m_1}$, $\ord({\omega_{0}}^{t_2})={m_2}$, $\lcm[{m_1},{m_2}]=l,\;\gcd\left({t_1}-{t_2},l \right) =1,\;3\le {r_1}+{r_2}\le 4$ and $\omega_{0}$ is the primitive element in ${{\mathbb F}_p}$.

\begin{corollary}\label{theorem 3.6}
Let $\mathcal{C}_{a}$ be an MDS symbol-pair code,  if  ${r_1}\ne 0$ and ${r_2}\ne 0$.
\end{corollary}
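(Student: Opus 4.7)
The plan is to reduce $\mathcal{C}_a$ to one of the codes already handled in Lemma \ref{lemma 3.7 } and Propositions \ref{Proposition 3.3}--\ref{Proposition 3.4} by a monomial substitution on the ambient cyclic algebra. Specifically, since $\omega_0^{t_2}$ is an $l$-th root of unity (it is a root of the factor $x^l-1$ of $x^{lp}-1$), the map $x\mapsto \omega_0^{t_2}y$ induces a ring isomorphism $\mathbb{F}_p[x]/\langle x^{lp}-1\rangle \cong \mathbb{F}_p[y]/\langle y^{lp}-1\rangle$. A codeword $c(x)=\sum c_i x^i$ is sent to $\sum c_i(\omega_0^{t_2})^i y^i$, which has the same support, and hence the same Hamming and symbol-pair weights. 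Under this isomorphism $g_a(x)$ becomes, up to a unit in $\mathbb{F}_p^*$, the polynomial $\widetilde{g}(y)=(y-1)^{r_2}(y-\omega)^{r_1}$, where $\omega:=\omega_0^{t_1-t_2}$.

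Next I would verify that $\omega$ is a primitive $l$-th root of unity in $\mathbb{F}_p^*$. Both $\omega_0^{t_1}$ and $\omega_0^{t_2}$ lie in the unique cyclic subgroup of $\mathbb{F}_p^*$ of order $l$, so their quotient $\omega$ does as well. The combined hypotheses $\ord(\omega_0^{t_i})=m_i$, $\lcm[m_1,m_2]=l$, and $\gcd(t_1-t_2,l)=1$ force $\ord(\omega)=l$: writing $\omega_0^{t_i}=\zeta^{s_i}$ for a fixed primitive $l$-th root $\zeta$, any common factor of $s_1-s_2$ and $l$ would descend to a common factor of $t_1-t_2$ and $l$, contradicting coprimality. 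Thus $\omega$ plays exactly the role of the primitive $l$-th root appearing in Lemma \ref{lemma 3.7 } and Propositions \ref{Proposition 3.3}--\ref{Proposition 3.4}.

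Since $3\le r_1+r_2\le 4$ and $r_1,r_2\ge 1$, the pair $(r_1,r_2)$ lies in the finite set $\{(1,2),(2,1),(1,3),(2,2),(3,1)\}$. For $(2,2)$, Proposition \ref{Proposition 3.4} applies directly and yields an MDS $(lp,6)_p$ symbol-pair code. For $(1,3)$ the third bullet of Lemma \ref{lemma 3.7 } applies; for $(3,1)$, a further substitution $z=\omega^{-1}y$ swaps the two roots and again reduces to that same bullet of Lemma \ref{lemma 3.7 }. For $r_1+r_2=3$, i.e.\ $(r_1,r_2)\in\{(1,2),(2,1)\}$, the generator has degree $3$ and the target is an MDS $(lp,5)_p$ symbol-pair code; this can be obtained either by invoking the MDS $(lp,5)_p$ constructions cited in the introduction or by redoing the pattern of Propositions \ref{Proposition 3.2}--\ref{Proposition 3.4} directly: use Lemma \ref{lemma 3.1} to show $d_H=3$, apply Lemma \ref{lemma 3.3 } to lift this to $d_p\ge 5$, and exclude a codeword of symbol-pair weight $4$ by translating the vanishing conditions $c(1)=c(\omega)=0$ together with the single derivative condition at the multiple root into a linear system whose only solution over $\mathbb{F}_p$ is zero, exploiting $\omega\ne 1$ and $p$ odd. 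The main obstacle will be precisely this $r_1+r_2=3$ subcase, since it is not literally covered by any earlier lemma in the excerpt and requires the direct weight-$4$ exclusion; a secondary subtlety is the order computation for $\omega$ in the previous step, which has to be extracted carefully from the slightly indirect hypotheses on $t_1,t_2,m_1,m_2$ and $l$.
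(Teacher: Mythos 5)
Your proposal is correct, but it follows a genuinely different route from the paper. The paper proves Corollary \ref{theorem 3.6} by direct computation with the general roots $\omega_0^{t_1}$ and $\omega_0^{t_2}$: in each of the cases $(r_1,r_2)\in\{(2,1),(3,1),(2,2)\}$ (the remaining pairs being symmetric) it writes down the linear systems coming from $c$ and its derivatives vanishing at these two roots, uses $\gcd(t_1-t_2,l)=1$ to force $lp\mid t$ in the weight-$2$ exclusion, and then rules out the weight-$3$, pair-weight-$5$ patterns case by case. You instead normalize first: the monomial substitution $x\mapsto\omega_0^{t_2}y$ is a support-preserving isomorphism of $\mathbb{F}_p[x]/\langle x^{lp}-1\rangle$ sending $\mathcal{C}_a$ to $\bigl\langle (y-1)^{r_2}(y-\omega)^{r_1}\bigr\rangle$ with $\omega=\omega_0^{t_1-t_2}$, and your order computation (writing $\omega_0^{t_i}=\zeta^{s_i}$ in the subgroup of order $l$ and descending a common divisor of $s_1-s_2$ and $l$ to one of $t_1-t_2$ and $l$) correctly shows $\ord(\omega)=l$; this reduces $(1,3)$, $(3,1)$ and $(2,2)$ to Lemma \ref{lemma 3.7 } and Propositions \ref{Proposition 3.3}--\ref{Proposition 3.4}, leaving only the degree-$3$ subcase, for which your sketch (no weight-$2$ codeword, hence $d_H=3$, hence $d_p\ge 5$ by Lemma \ref{lemma 3.3 }, matched by the Singleton bound) is exactly the paper's Case I and is sound. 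Your approach buys conceptual economy -- it exposes the corollary as a monomial-equivalence statement and explains why the hypothesis $\gcd(t_1-t_2,l)=1$ is the right one -- and it makes rigorous the equivalence that the paper only gestures at, in the remark following Corollary \ref{coro C211}, for the special case where $\omega_0$ itself has order $l$; the paper's approach buys self-containment at the cost of repeating the weight-exclusion computations. Two minor points: for $r_1+r_2=3$ the exclusion of a pair-weight-$4$ codeword is automatic once $d_H=3$ is established (a Hamming-weight-$3$ word with pair weight $4$ would have three consecutive nonzero coordinates, contradicting $\deg g_a=3$), so your ``linear system'' step there is unnecessary; and you should state explicitly that $l>2$ (the paper assumes this, deferring $l\le 2$ to Dinh's results), since otherwise $\omega$ would not satisfy the hypotheses of the cited propositions.
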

\begin{proof}
There are three cases that need to be discussed, $({r_1},{r_2} ) =\left(2,1 \right) $,  $\left(1,3 \right) $ and $\left(2,2 \right)$. When $({r_1},{r_2} ) =\left(1,2 \right) $ and  $\left(3,1 \right) $ is satisfied, it is similar to $({r_1},{r_2} ) =\left(2,1 \right) $ and  $\left(1,3 \right) $. 

	\item [\textbf{Case I.}] For the case of $({r_1},{r_2} ) =\left(2,1 \right) $,
	if there exsits a nonzero codeword with $\omega_H=2$ in $\mathcal{C}_{a}$, 
	without loss of generality, suppose that the constant term of $c\left( x \right)$ is 1. We denote that
	$$c(x)=1+a_1x^t,$$ 
	where $a_1\ne 0$ and $t\ne 0$.	It follows from $c({\omega_{0}}^{t_1})=c({\omega_{0}}^{t_2})=0$ that 
	\begin{equation*}  
		\left\{  
		\begin{array}{lr}  
			1+{a_1} {\omega_{0}}^{tt_1} = 0, &  \\  
			1+{a_1} {\omega_{0}}^{tt_2}= 0. &    
		\end{array}  
		\right.
	\end{equation*}
	By solving the system, we have	${{\omega_{0}}^{t\left({{t_1}-{t_2}}\right)}}=1$. Together with $c^{(1)}({\omega_{0}}^{t_1})=0$ and $\gcd\left({t_1}-{t_2},l \right) =1$, one can immediately get $lp$ is a divisor of $t$, which contradicts with the code length $lp$. 
	
	Thus, combined with Lemma \ref{lemma 3.1}, the minimum Hamming distance of $\mathcal{C}_{a}$ is $d_H=3$. 
	By Lemma \ref{lemma 3.3 }, we have $\mathcal{C}_{a}$ is an MDS $ (lp,5)_{p} $ symbol-pair code.
	
	\item	[\textbf{Case II.}] For the case of	 $({t_1},{t_2} ) =\left(3,1 \right) $, we have the generator polynomial	
	$$g_{a}\left( x \right) = \left( {{x} - {\omega_{0}}^{t_1}} \right)^3\left( {{x} - {\omega_{0}}^{t_2}} \right).$$
	By the proof of $   {\bf{Case\; I}}$, since $ {\bf{Case\; II}}$ is a subcode of $ {\bf{Case\; I}}$, we can draw the conclusion that the minimum Hamming distance of $\mathcal{C}_{a}$ is $3$ in $  {\bf{Case\; II}}$, when  $({t_1},{t_2} ) =\left(3,1 \right) $. 
	
	If there  is a codeword with Hamming weight $3$ and symbol-pair weight $5$. Then its certain cyclic shift must be the following form
	$$ \left( { \star ,\; \star ,\;{0_{s_1}},\; \star ,\;{0_{s_2}}} \right),$$ where each $ \star $ denotes an element in ${\mathbb F}_p^{\text{*}}$ and ${0_{s_1}}$, ${0_{s_2}}$ are all-zero vectors  with lengths $s_1$ and $s_2$ respectively. Then we have a codeword polynomial
	$$c\left( x \right) = 1 + {a_1}x + {a_2}{x^t}.$$
	However, it follows from $c^{(1)}\left( {\omega_{0}}^{t_1} \right)= c^{(2)}\left( {\omega_{0}}^{t_1} \right) =0$ that 
	\begin{equation*}  
		\left\{  
		\begin{array}{lr}  
			{a_1} + t{a_2}{\omega_{0}}^{(t-1)t_1} = 0, &\\
			t(t-1){a_2}{\omega_{0}}^{(t-2)t_1} = 0. &    
		\end{array}  
		\right.  
	\end{equation*}
	By solving the system, we have ${p\,\left| \,{t - 1} \right.}$.	Then $c^{(1)}\left( {\omega_{0}}^{t_1} \right)= c\left( {\omega_{0}}^{t_1} \right) =0$  indicates
	\begin{equation*}  
		\left\{  
		\begin{array}{lr}  
			1+	{a_1}{\omega_{0}}^{t_1} + {a_2}{\omega_{0}}^{{t_1}t} = 0, & \\ 
			{a_1}  + {a_2}{\omega_{0}}^{{t_1}(t-1)} = 0. &   
		\end{array}  
		\right.  
	\end{equation*}	
Then, we can derive a contradiction, since $p$ is an odd prime.	
	
	Therefore, there does not exsit a nonzero codeword with Hamming weight $3$ and symbol-pair weight $5$. Then, $\mathcal{C}_{a}$ is an MDS $ (lp,6)_{p} $ symbol-pair code, when $({r_1},{r_2} ) =\left(3,1 \right) $.
	
	\item [\textbf{Case III.}] For the case of	 $({t_1},{t_2} ) =\left(2,2 \right) $, similarly, we have the generator polynomial	
	$$g_{a}\left( x \right) = \left( {{x} - {\omega_{0}}^{t_1}} \right)^2\left( {{x} - {\omega_{0}}^{t_2}} \right)^2,$$
	by the proof of $   {\bf{Case\; I}}$, we can draw the conclusion that the minimum Hamming distance of $\mathcal{C}_{a}$ is $d_{H}=3$. 
	Similar to $  {\bf{Case\; II}}$, there is no codeword with Hamming weight $4$ and symbol-pair weight $5$. 
	
	If there exsits a codeword with Hamming weight $3$ and symbol-pair weight $5$, the codeword certain cyclic shift must have a form
	$$\left( { \star ,\; \star ,\; {{0_{s}}_{1}} ,\star ,{0_s}_{2}} \right),$$
	where each $ \star $ denotes an element in  ${\mathbb F}_p^{\text{*}}$ and ${0_s}_{1}$, ${0_{s}}_2$ are all-zero vectors  with lengths $s_1$ and $s_2$ respectively. Without loss of generality, suppose that the constant term of $c\left( x \right)$ is 1. We denote that
	$$c\left( x \right) = 1 + {a_1}x + {a_2}{x^t}.$$
	However, $c^{(1)}\left( {\omega_{0}}^{t_1} \right)= c^{(1)}\left( {\omega_{0}}^{t_2} \right) =0$ induces that
	\begin{equation*}  
		\left\{  
		\begin{array}{lr}  
			{a_1} + t{a_2}{\omega_{0}}^{(t-1)t_1} = 0, &\\
			{a_1} + t{a_2}{\omega_{0}}^{(t-1)t_2} = 0. &    
		\end{array}  
		\right.  
	\end{equation*}
	By solving the system, we have ${\omega_{0}}^{(t-1)(t_1-t_2)}=1$, since $t\ne kp$, otherwise ${a_1} = 0$. Together with $\gcd\left({t_1}-{t_2},l \right) =1$, one can immediately get ${l\,\left| \,{t - 1} \right.}$ and ${a_1} + t{a_2} = 0$. Combined with $$c\left( {{\omega_{0}}}^{t_1} \right)= c\left( {\omega_{0}}^{t_2} \right) =0,$$ we have
	\begin{equation*}  
		\left\{  
		\begin{array}{lr}  
			1+	{a_1}{\omega_{0}}^{t_1} + {a_2}{\omega_{0}}^{t_1} = 0, & \\ 
			1+	{a_1}{\omega_{0}}^{t_2} + {a_2}{\omega_{0}}^{t_2} = 0, &   
		\end{array}  
		\right.  
	\end{equation*}
	which implies ${a_1} + {a_2} = 0$. Thus, we have ${p\left| {t - 1} \right.}$, which contradicts the code length $lp$.


As a consequence, we prove that there no exsits a codeword with Hamming weight $3$ and symbol-pair weight $5$. Then, $\mathcal{C}_{a}$ is an MDS $ (lp,6)_{p} $ symbol-pair code, when $({r_1},{r_2} ) =\left(2,2 \right) $ and  $\gcd\left({t_1}-{t_2},l \right) =1$.

\end{proof}	
\begin{corollary}\label{coro C211}
 $\mathcal{C}_{b}$ is an MDS $(lp,6)_{6}$ symbol-pair code.
\end{corollary}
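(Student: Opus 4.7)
The plan is to adapt the template of Proposition~\ref{Proposition 3.2} and Case III of Corollary~\ref{theorem 3.6} to the generator $g_{b}(x) = (x-1)^{2}(x-u)(x-v)$, where I set $u := \omega_{0}^{t_{1}}$ and $v := \omega_{0}^{t_{2}}$. First I would apply Lemma~\ref{lemma 3.1} to establish $d_{H}(\mathcal{C}_{b}) = 3$. The decisive computation is at $t = 0$: in $\bar{\mathcal{C}}_{0} = \langle (x-1)(x-u)(x-v)\rangle$ no weight-$2$ codeword can exist, because a binomial $c_{1}x^{i} + c_{2}x^{j}$ vanishing at the three distinct roots $1, u, v$ would force $c_{1} + c_{2} = 0$ together with $u^{i-j} = v^{i-j} = 1$, hence $l \mid (i-j)$, impossible for $0 \le i < j < l$. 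The remaining contributions $P_{1}\,d_{H}(\bar{\mathcal{C}}_{1}) = 4$ and $P_{t} \ge 3$ for $2 \le t \le p-1$ are larger. Combined with Lemma~\ref{lemma 3.3 }, this yields $d_{p}(\mathcal{C}_{b}) \ge 5$.

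Next I would rule out codewords with $\omega_{p} = 5$. If $\omega_{H} = 4$ with four consecutive nonzero entries, then after a shift $\deg c \le 3 < 4 = \deg g_{b}$, a contradiction. For $\omega_{H} = 3$, write $c(x) = 1 + a_{1}x + a_{2}x^{t}$ with $t \ge 2$. The divisibility $(x-1)^{2} \mid c(x)$ gives $c(1) = c'(1) = 0$, which solves uniquely to $a_{2} = 1/(t-1)$ and $a_{1} = -t/(t-1)$, requiring $p \nmid t(t-1)$. Substituting into $c(u) = c(v) = 0$ then yields the key system
\[
u^{t} - tu + (t-1) = 0, \qquad v^{t} - tv + (t-1) = 0.
\]

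The heart of the argument --- and the step I expect to be the main obstacle --- is showing this system is incompatible with the hypotheses. Subtraction produces the cleaner identity $uv(u^{t-1} - v^{t-1}) = (t-1)(u-v)$. The hypothesis $\gcd(t_{1}-t_{2},\, l) = 1$ combined with $\lcm(m_{1}, m_{2}) = l$ forces the quotient $w := u/v = \omega_{0}^{t_{1}-t_{2}}$ to have exact order $l$ in $\mathbb{F}_{p}^{*}$, so $w^{t-1} = 1$ if and only if $l \mid (t-1)$. If $l \mid (t-1)$, then $u^{t-1} = v^{t-1} = 1$ makes the left side of the identity vanish while the right side $(t-1)(u-v)$ is nonzero (as $u \ne v$ and $p \nmid t-1$), a contradiction. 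If $l \nmid (t-1)$, substituting $u = wv$ and $v^{t} = tv - (t-1)$ reduces the identity to $wv^{t}(w^{t-1}-1) = (t-1)(w-1)$; a case analysis on $s := (t-1) \bmod l \in \{1,\dots,l-1\}$, exploiting $\ord(w) = l$ together with the individual equations of the key system, collapses every case to one of $u = v$, $p \mid t$, $p \mid (t-1)$, or $lp \mid (t-1)$, each already excluded. Having ruled out $\omega_{p} = 5$, the Singleton bound of Lemma~\ref{lemma 3.2 } closes $d_{p}(\mathcal{C}_{b}) = 6$, and $\mathcal{C}_{b}$ is MDS.
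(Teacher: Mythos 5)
Your setup through the key system is sound, and in one respect it is more honest than the paper's own argument: you impose all four conditions $c(1)=c'(1)=c(u)=c(v)=0$ on $c(x)=1+a_1x+a_2x^{t}$, whereas the paper's proof of this corollary never uses the second root $\omega_0^{t_2}$ at all --- it works only with $c(1)=c'(1)=c(\omega_0^{t_1})=0$, and its final display reduces to an algebraic identity (once $t=1+\omega_0^{t_1}+\cdots+\omega_0^{t_1(t-1)}$ is substituted, the relation $t-1-t\omega_0^{t_1}+\omega_0^{t_1t}=0$ holds automatically), so no contradiction actually follows there. The gap in your proposal is the last step: for $l\nmid t-1$ you assert that a case analysis on $s=(t-1)\bmod l$ collapses every case to one of $u=v$, $p\mid t$, $p\mid t-1$, $lp\mid t-1$, but you never exhibit it, and it cannot be carried out. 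The system $u^{t}-tu+(t-1)=0$, $v^{t}-tv+(t-1)=0$ is equivalent to $\frac{u^{r}-1}{u-1}=\frac{v^{r}-1}{v-1}=t\bmod p$ with $r=t\bmod l$, and since $t\bmod l$ and $t\bmod p$ are independent by CRT, a solution exists as soon as the two partial geometric sums collide at a single exponent $r$ with common value outside $\{0,1\}$ --- nothing in the hypotheses prevents this.

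Concretely, take $p=13$, $\omega_0=2$, $t_1=3$, $t_2=2$, so $u=8$ ($m_1=4$), $v=4$ ($m_2=6$), $l=\lcm(4,6)=12$, $\gcd(t_1-t_2,l)=1$, and $w=u/v=2$ has order $l$, exactly your setting. For $r=3$ one has $1+u+u^{2}=73\equiv 8$ and $1+v+v^{2}=21\equiv 8\pmod{13}$, so choosing $t\equiv 3\pmod{12}$ and $t\equiv 8\pmod{13}$, i.e.\ $t=99$, gives $a_1=-t/(t-1)\equiv 10$, $a_2=1/(t-1)\equiv 2$, and $c(x)=1+10x+2x^{99}$ satisfies $c(1)=c'(1)=c(4)=c(8)=0$ in $\mathbb{F}_{13}[x]/(x^{156}-1)$. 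This is a codeword of $\mathcal{C}_{b}$ with Hamming weight $3$ and symbol-pair weight $5$, and none of your four terminal cases holds ($u\ne v$, $13\nmid 99$, $13\nmid 98$, $156\nmid 98$). So the missing case analysis is not merely unwritten but unfillable: the statement fails for these admissible parameters, and any correct version must add a hypothesis excluding collisions $\sum_{j<r}u^{j}=\sum_{j<r}v^{j}\notin\{0,1\}$.
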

 \begin{proof}
By Lemma \ref{lemma 3.1} and Lemma \ref{lemma 3.3 } the minimum Hamming distance of $\mathcal{C}_{b}$ is 3 and the minimum symbol-pair  distance $d_{p}(\mathcal{C}_{b})\ge 5$, for the generator ploynomial  $ \mathcal{C}_{b} $ is
$$g_{b}\left( x \right) = \left( x - 1 \right)^{2}\left( x - {\omega_{0}}^{t_1} \right)\left(x - {\omega_{0}}^{t_2}\right).$$ 
Using techniques similar to those used in the proof of Proposition \ref{Proposition 3.2}, we see that there are no codewords	of $\mathcal{C}_{b}$ with Hamming weight 4 such that the 4 nonzero terms appear with consecutive coordinates. 
 
 If there exsits a codeword with Hamming weight $3$ and symbol-pair weight $5$, the codeword certain cyclic shift must have a form
 $$\left( { \star ,\; \star ,\; {{0_{s}}_{1}} ,\star ,{0_s}_{2}} \right),$$
 where each $ \star $ denotes an element in  ${\mathbb F}_p^{\text{*}}$ and ${0_s}_{1}$, ${0_{s}}_2$ are all-zero vectors  with lengths $s_1$ and $s_2$ respectively. Without loss of generality, suppose that the constant term of $c\left( x \right)$ is 1. We denote that
 $$c\left( x \right) = 1 + {a_1}x + {a_2}{x^t}.$$
 
	It follows from 
	$c\left( 1 \right) =c^{(1)}\left( 1 \right) = 0$
	that
 \begin{equation*}  
 	\left\{  
 	\begin{array}{lr}  
 		1+	{a_1} + {a_2} = 0, & \\ 
 		  {a_1} + t{a_2} = 0, &
 	\end{array}  
 	\right.  
 \end{equation*}
By solving the system, we have $a_{1}= {-t \over {t-1}}$ and $a_{2}= {1 \over {t-1}}$. Then combined with $c\left( 1 \right) =c\left( {{\omega_{0}}}^{t_1} \right)=0$, we have 
$$t={{\omega_{0}}^{t_1{(t-1)}}}+{{\omega_{0}}^{t_1{(t-2)}}}+\cdots+{{\omega_{0}}^{t_1}}+1.$$
By $a_{1}= {-t \over {t-1}}$, $a_{2}= {1 \over {t-1}}$ and $c\left( {{\omega_{0}}}^{t_1} \right)=0$, one can obtain that 
$$t-1-t{{\omega_{0}}}^{t_1} +{{\omega_{0}}}^{{t_1}t}=0.$$
This implies that 
\[\begin{gathered}
	\;\;\;\;{{\omega_{0}}^{t_1{(t-1)}}}+{{\omega_{0}}^{t_1{(t-2)}}}+\cdots+{{\omega_{0}}^{t_1}}+{{\omega_{0}}}^{{t_1}t}
	+{{\omega_{0}}^{t_1{(t-1)}}}+{{\omega_{0}}^{t_1{(t-2)}}}+\cdots+{{\omega_{0}}^{t_1}}+{{\omega_{0}}}^{{t_1}t}\hfill \\
=2{{\omega_{0}}}^{{t_1}}({{\omega_{0}}^{t_1{(t-1)}}}+{{\omega_{0}}^{t_1{(t-2)}}}+\cdots+{{\omega_{0}}^{t_1}})
	=2t{{\omega_{0}}}^{{t_1}}
	=0,\hfill \\
\end{gathered} \]
which is a contradiction for $c^{(1)}\left( 1 \right) = 0$.

This completes the proof of the Corollary \ref{coro C211}.
 \end{proof}
\begin{remark}
  Let $\omega_{0}$ be the primitive $ l$-th  root of unity in ${{\mathbb F}_p}$ , we can deduce
	\begin{itemize}
	
	\item  $\mathcal{C}_{a}=\left\langle {\left( {{x} - {\omega_{0}}^{t_1}} \right)^2\left( {{x} - {\omega_{0}}^{t_2}} \right)} \right\rangle=\left\langle { (x-1) ^ 2 (x - \omega)} \right\rangle$; 
	\item  $\mathcal{C}_{a}=\left\langle {\left( {{x} - {\omega_{0}}^{t_1}} \right)^2\left( {{x} - {\omega_{0}}^{t_2}} \right)^2} \right\rangle =\left\langle { (x-1) ^ 2 (x - \omega)^2} \right\rangle$;
	\item  $\mathcal{C}_{a}=\left\langle {\left( {{x} - {\omega_{0}}^{t_1}} \right)^3\left( {{x} - {\omega_{0}}^{t_2}} \right)} \right\rangle=\left\langle { (x-1) ^ 3 (x - \omega)} \right\rangle$,
	 	\end{itemize}

\end{remark}
Repeated-root cyclic code $\mathcal{C} =\left\langle { (x-1) ^ 2 (x - \omega)} \right\rangle$ is proposed in Chen \cite{Chen-2017-CIT}. In this paper, $\mathcal{C}=\left\langle { (x-1) ^ 3 (x - \omega)} \right\rangle$ and $\mathcal{C} =\left\langle { (x-1) ^ 2 (x - \omega)^2} \right\rangle$ are are two cases in Theorem \ref{theorem 3.1}.

We use an example to illustrate that the repeated-root cyclic codes of the generator polynomials with the same forms in the above  Corollary \ref{theorem 3.6} are not all MDS symbol-pair codes.
\begin{example} \label{example 3.6}
Let $\mathcal{C}$ and  be a repeated-root cyclic code over ${{\mathbb F}_5}$ and the generator ploynomial of $ \mathcal{C} $  is
$$g\left( x \right) = {\left( {x - 2} \right)^2}\left( {{x} - 3} \right),$$
where $\omega=3$ is a primitive element in ${{\mathbb F}_5}$ and $2=3^3$. Then we have the minimum Hamming distance $d_H=2$ by a magma progarm. Therefore, $\mathcal{C}$ is not an MDS symbol-pair code. 

Similarly, when the generator ploynomial of $ \mathcal{C} $  is one of 	$$g\left( x \right) = {\left( {x - 2} \right)^3}\left( {{x} - 3} \right)$$ and
$$g\left( x \right) = {\left( {x - 2} \right)^2}\left( {{x} - 3} \right)^2,$$
$ \mathcal{C} $ is still not an MDS symbol-pair code, since minimum Hamming distance is $d_H=2$.
\end{example}

Now we present a new class of AMDS symbol-pair codes with the minimum symbol-pair distance 7. 

Let $\mathcal{C}_{1}$ be the cyclic codes over ${{\mathbb F}_p}$. The generator ploynomial of $\mathcal{C}_{1}$ is $$g_{1}\left( x \right) = {\left( {x - 1} \right)^4}\left( {{x} -\omega} \right)\left( {{x} - {\omega}^2} \right),$$  where $\omega$ is a primitive $l$-th  root of unity in ${{\mathbb F}_p}$. 

\begin{theorem}\label{theorem3.9}
	$\mathcal{C}_{1}$ is an AMDS ${\left( {lp,\;7} \right)_p}$ symbol-pair code, if $ l $ odd and $ l \geqslant 3 $.
\end{theorem}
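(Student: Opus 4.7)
The plan is to show that $d_p(\mathcal{C}_{1}) = 7$ exactly. Since $\deg g_{1} = 6$ and hence $\dim \mathcal{C}_{1} = lp - 6$, this gives $|\mathcal{C}_{1}| = p^{lp - 6} = p^{n - d_p + 1}$, which is the AMDS size.

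First I would apply Lemma~\ref{lemma 3.1} to determine $d_H(\mathcal{C}_{1}) = 4$. For $t = 0$ the reduced generator $\overline{g}_{0} = (x-1)(x-\omega)(x-\omega^{2})$ has three consecutive roots $\omega^{0}, \omega^{1}, \omega^{2}$, so the BCH bound combined with the Singleton bound forces $d_H(\overline{\mathcal{C}}_{0}) = 4$ (the degenerate case $l = 3$ yields $\overline{\mathcal{C}}_{0} = \{0\}$ and is irrelevant to the minimum), and with $P_{0} = 1$ this contributes $4$; for $t = 1$ we have $\overline{g}_{1} = (x-1)$ with $d_H(\overline{\mathcal{C}}_{1}) = 2$ and $P_{1} = 2$, again contributing $4$; for $2 \leq t \leq p-1$ the products $P_{t} \cdot d_H(\overline{\mathcal{C}}_{t})$ exceed $4$. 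Thus $d_H = 4$, and since $n - \dim\mathcal{C}_{1} + 1 = 7 > 4$ the code is not Hamming-MDS, so Lemma~\ref{lemma 3.3 } gives $d_p(\mathcal{C}_{1}) \geq d_H + 2 = 6$.

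The crux is strengthening this to $d_p \geq 7$ by excluding every codeword $\mathbf{c}$ with $\omega_p(\mathbf{c}) = 6$. Using the identity $\omega_p = \omega_H + k$, where $k$ is the number of maximal runs of consecutive nonzero coordinates in a cyclic shift of $\mathbf{c}$, together with $\omega_H \geq 4$, the only patterns are $(5)$, $(3,1)$, and $(2,2)$. A $(5)$-pattern codeword $c(x) = a_{0} + a_{1}x + \cdots + a_{4}x^{4}$ is forced by $c^{(j)}(1) = 0$ for $j = 0,\ldots,3$ to be proportional to $(x-1)^{4}$, but then $c(\omega) = a_{4}(\omega-1)^{4} \neq 0$ contradicts $c \in \mathcal{C}_{1}$. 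A $(3,1)$-pattern codeword $c(x) = 1 + A x^{t} + B x^{t+1} + C x^{t+2}$, after expanding the $x^{t+i}$ in the basis $\{(x-1)^{j}\}$, gives the compatibility condition $t(t+1)(t+2) \equiv 0 \pmod p$ and the closed form $A = -1 - t(t+3)/2$, $B = t(t+2)$, $C = -t(t+1)/2$; in each of the residue classes $t \equiv 0,-1,-2 \pmod p$, at least two of $A,B,C$ vanish, so the length-$3$ block cannot form. A $(2,2)$-pattern codeword $c(x) = 1 + ax + b x^{t} + c x^{t+1}$ is disposed of by the factorizations
\[
c''(1) \;=\; t\bigl[b(t-1) + c(t+1)\bigr], \qquad c'''(1) \;=\; t(t-1)\bigl[b(t-2) + c(t+1)\bigr],
\]
which force $b = 0$ whenever $t \not\equiv 0, 1 \pmod p$; the case $t \equiv 1 \pmod p$ makes $c''(1) = 2c$ and hence $c = 0$; and for $t \equiv 0 \pmod p$ one first derives $a = -c$, $b = -1$, then with $k = t/p \in \{1, \ldots, l-1\}$ factors $c(\omega) = (1 - \omega^{k})(1 - c\omega)$ and $c(\omega^{2}) = (1 - \omega^{2k})(1 - c\omega^{2})$; since $l$ is odd and $l \geq 3$, neither $1 - \omega^{k}$ nor $1 - \omega^{2k}$ vanishes, forcing $c = \omega^{-1} = \omega^{-2}$ and hence $\omega = 1$, a contradiction.

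To complete the proof with $d_p \leq 7$, I would exhibit the explicit codeword
\[
c(x) \;=\; 1 + x - x^{p} - x^{(l-1)p + 1}.
\]
The conditions $c^{(j)}(1) = 0$ for $j = 0, 1, 2, 3$ all hold because in each derivative the coefficients arising from $x^{p}$ and $x^{(l-1)p+1}$ carry a factor divisible by $p$, while the contributions from $1 + x$ cancel the constant terms; and Fermat's little theorem $\omega^{p} = \omega$ gives $\omega^{(l-1)p+1} = \omega^{l} = 1$, so $c(\omega) = 1 + \omega - \omega - 1 = 0$ and similarly $c(\omega^{2}) = 0$. Thus $c(x) \in \mathcal{C}_{1}$, and $l \geq 3$ together with $p \geq 3$ guarantee that the support $\{0, 1, p, (l-1)p+1\}$ realises a $(2,1,1)$ block pattern, giving $\omega_{p}(c) = 4 + 3 = 7$. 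Combined with the previous step this yields $d_{p} = 7$ and the AMDS claim. The main obstacle is the $(2,2)$ sub-case of the second step: eliminating $t \equiv 0 \pmod p$ cannot be done via the $(x-1)^{4}$-divisibility alone, and it is precisely the step ``$\omega^{-1} = \omega^{-2} \Rightarrow \omega = 1$'' where the hypothesis that $l$ is odd and $l \geq 3$ enters essentially.
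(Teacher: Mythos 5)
Your proof is correct and follows essentially the same strategy as the paper's: compute $d_H=4$ via Lemma~\ref{lemma 3.1}, deduce $d_p\ge 6$ from Lemma~\ref{lemma 3.3 }, rule out pair-weight $6$ by the same three run-patterns $(5)$, $(3,1)$, $(2,2)$ using derivative conditions at $1,\omega,\omega^2$ (with $l$ odd entering only in the $(2,2)$, $t\equiv 0 \pmod p$ subcase, exactly as in the paper), and exhibit an explicit pair-weight-$7$ codeword. The only cosmetic differences are your determinant/closed-form treatment of the $(3,1)$ case and your choice of witness $1+x-x^{p}-x^{(l-1)p+1}$ in place of the paper's $(x^{p}-1)(x^{p-1}-1)$.
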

\begin{proof}
	$\mathcal{C}_{1}$ is the cyclic code over ${{\mathbb F}_p}$ generated by $$g_{1}\left( x \right) = {\left( {x - 1} \right)^4}\left( {{x} -\omega} \right)\left( {{x} - {\omega}^2} \right).$$

\noindent	By Lemma \ref{lemma 3.1}, one can derive that $\mathcal{C}_{1}$ is an $\left[ {lp,\;lp - 6,\;4} \right]$ repeated-root cyclic codes code over ${{\mathbb F}_p}$. Lemma \ref{lemma 3.3 } yields that ${d_p} \geqslant 6$, since $\mathcal{C}_{1}$ is not an MDS cyclic code. To prove that $\mathcal{C}_{1}$ is an AMDS symbol-pair code with the minimum symbol-pair distance 7, it is sufficient to verify that there is no a codeword in $\mathcal{ C}_{1}$ with the symbol-pair weight 6. 
	
	If there are codewords in $\mathcal{C}_{1}$ with  Hamming weight 5 and symbol-pair weight 6, then its certain cyclic shift must have the form
	$$ \left( { \star ,\; \star ,\; \star ,\; \star ,\; \star ,{0_s}} \right),$$
	where each $ \star $ denotes an element in  ${\mathbb F}_p^{\text{*}}$ and ${0_s}$ is all-zero vector of length $s$. Without loss of generality, suppose that the constant term of $c\left( x \right)$ is 1. We denote that
	$$c\left( x \right) = 1 + {a_1}x + {a_2}{x^2} + {a_3}{x^3} + {a_4}{x^4} ,$$
	This leads to $\deg\left( {c\left( x \right)} \right) =4 < 6 =\deg\left( {g\left( x \right)} \right)$.
	
	 If $c \in \mathcal{ C}$ has the symbol-pair weight 6 with the Hamming weight 4, then its certain cyclic shift must have the forms
	$$\left( { \star ,\star , \star ,{0_{s_1}}, \star ,{0_{s_2}}} \right)$$ or
	$$ \left( { \star , \star ,{0_{s_1}},\star , \star ,{0_{s_2}}} \right),$$
	where each $ \star $ denotes an element in ${\mathbb F}_p^{\text{*}}$ and ${0_{s_{1}}}$, ${0_{s_{2}}}$ are all-zero vectors  with lengths $s_1$ and $s_2$ respectively.
	
	\item[\textbf{Case I.}] For the case of $$ \left( { \star ,\star , \star ,{0_{s_1}}, \star ,{0_{s_2}}} \right),$$
	without loss of generality, we denote a codeword polynomial
	$$c\left( x \right) = 1 + {a_1}x + {a_2}{x^2} + {a_3}{x^t}.$$
	It follows from $c^{(1)}\left( 1 \right) = c^{(2)}\left( 1 \right) = 0$ that
	\begin{equation*}  
		\left\{  
		\begin{array}{lr}    
			{a_1} + 2{a_2} + t{a_3} = 0,   &\\  
			2{a_2} + t(t-1){a_3} = 0.   & 
		\end{array}  
		\right.  
	\end{equation*}
	By solving the system, we have $ t(t-2){a_3} ={a_1} $. By $c^{(2)}\left( 1 \right) = 0$, we can conclude that 
		\begin{equation*}  
		\left\{  
		\begin{array}{lr} 
			t - 2 \ne kp,k<l,   &\\   
			t - 1 \ne kp,k<l,   &\\  
			t \ne kp,k<l.   & 
		\end{array}  
		\right.  
	\end{equation*}
	 This is a contradiction for  $ c^{(3)}\left( 1 \right)= 0 $ and $a_{3}\in  {\mathbb F}_p^{\text{*}}$.
	
	\item[\textbf{Case II.}]	 For the case of 	$$ \left( { \star , \star ,{0_{s_1}},\star , \star ,{0_{s_2}}} \right),$$
	without loss of generality, we denote
	$$c\left( x \right) = 1 + {a_1}x + {a_2}{x^2} + {a_3}{x^t} + {a_4}{x^{t + 1}}.$$
	It follows from $c\left( 1 \right) =c^{(1)}\left( 1 \right)= 0$ that
	\begin{equation*}  
		\left\{  
		\begin{array}{lr}  
			1 + {a_1} + {a_2} + {a_3} = 0,   &  \\  
			{a_1} + t{a_2} + (t+1){a_3} = 0,  & 
		\end{array}  
		\right.  
	\end{equation*}
	one can derive that  $$ (t-1){a_2} + t{a_3}-1 = 0. \; $$
	By $c^{(2)}\left( 1 \right) = 0$, we have
	$$ t(t-1){a_2} + t(t+1){a_3} = 0. $$
	This leads to $ t({a_3}+1) = 0 $. 
	Therefore, we have $ t=kp,0<k<l $ or $ a_{3}= -1 $.
	
	If $ t=kp,0<k<l $, then 
	\begin{equation*}  
		\left\{  
		\begin{array}{lr}  
			1 + {a_1} + {a_2} + {a_3} = 0, &  \\  
			{a_1} + {a_3} = 0.  & 
		\end{array}  
		\right.  
	\end{equation*}
	This indicates $ {a_1} =- {a_3} $	and $ {a_2}=-1 $. Combined with  $c\left( \omega  \right) = c\left( {\omega ^2} \right) = 0$, we have	
	\begin{equation*}  
		\left\{  
		\begin{array}{lr}  
			{a_1}\omega \left( {{\omega ^t} - 1} \right) = 1 - {\omega ^t}, &  \\  
			{a_1}\omega^2 \left( {{\omega ^{2t}} - 1} \right) = 1 - {\omega ^{2t}}.  & 
		\end{array}  
		\right.  
	\end{equation*}
	By solving the system,we have ${\omega ^{2t}} = 1 $, which contradicts $ l $ odd.
	
	If $ a_{3}= -1 $, by $ c(1)=0 $, we can obtain that $ {a_1} =- {a_2} $. Combined with  $c\left( \omega  \right) = c\left( {\omega ^2} \right) = 0$, we have	
	\begin{equation*}  
		\left\{  
		\begin{array}{lr}  
			{a_1}\omega \left( {{\omega ^{t-1}} - 1} \right) = 1 - {\omega ^{t+1}}, &  \\  
			{a_1}\omega^2 \left( {{\omega ^{2t-2}} - 1} \right) = 1 - {\omega ^{2t+2}}.  & 
		\end{array}  
		\right.  
	\end{equation*}
	
	\noindent Since $\omega$ is a primitive $ l$-th root of unity, then \[\omega \left( {{\omega ^{t - 1}} + 1} \right)\left( {1 - {\omega ^{t + 1}}} \right) = 1 - {\omega ^{2t + 2}}.\]
	This implies that $ \omega ^t = 1 $.
	Thus, $ a_{1}= -1, a_{2}= 1 $.
	By $$ t(t-1){a_2} + t(t+1){a_3} = 0,$$ we have $ 2t=kp $, which contradicts $ \omega ^t = 1 $.

	
	In order to prove that $\mathcal{ C}_{1}$ is an AMDS symbol-pair code, we need to find a codeword with the symbol-pair weight 7. Since 
	$$ c(x)=(x^{p}-1)(x^{p-1}-1)=x^{2p-1}-x^{p}-x^{p-1}+1 $$
	is a codeword of $\mathcal{ C}_{1}$ and $ {\omega _p}\left( {c(x)} \right)=7 $, $\mathcal{C}_{1}$ is an AMDS ${\left( {lp,\;7} \right)_p}$ symbol-pair code. 
\end{proof}

\subsection{MDS and AMDS Symbol-Pair Codes with length $3p$}
In this subsection, we obtain all MDS symbol-pair codes of $d_{p}\le 12$  and all AMDS symbol-pair codes of $d_{p}< 12$ from repeated-root cyclic codes with length $3p$. Furthermore, we discuss all minimum symbol-pair distance of the repeated-root cyclic codes with code length of $3p$, when the degree of generator ploynomials $\deg( g_{(r_{1},r_{2},r_{3})}(x)) \le 10$. 

Let $\mathcal{C}_{(r_{1},r_{2},r_{3})}$ be the repeated-root cyclic code over ${{\mathbb F}_p}$ and the generator ploynomial of $ \mathcal{C}_{(r_{1},r_{2},r_{3})} $ is 
$$g_{(r_{1},r_{2},r_{3})}\left( x \right) = (x - 1)^{r_{1}}(x - \omega)^{r_{2}}(x - {\omega ^2})^{r_{3}}.$$
where $\omega$ is a primitive $3$-th  root of unity in ${{\mathbb F}_p}$ and $r_{i}\le p-1,i=1,2,3$. 
\begin{remark}\label{proposion 3.9}
Let	  $\mathcal{C}_{(r_{1},r_{2},r_{3})}=\left\langle {{{(x - 1)}^{r_{1}}}{{(x - \omega)}^{r_{2}}}{{(x - {\omega ^{2}})}^{r_{3}}}} \right\rangle$ have the same minimum symbol-pair distance, if the exponents of the three factors of the generator polynomial can be swapped.
\end{remark}
\begin{proof}
	We first prove that such repeated-root cyclic codes $$\widetilde{\mathcal{C}} = \left\langle {{{(x - \omega^i)}^{r_{1}}}{{(x - \omega^{i+1})}^{r_{2}}}{{(x - {\omega ^{i+2}})}^{r_{3}}}} \right\rangle $$ are the same codes for $i=0,1,2$. 
	
	Without loss of generality, suppose that $$\widetilde g_1\left( x \right) = (x - 1)^{r_{1}}(x - \omega)^{r_{2}}(x - \omega ^2)^{r_{3}} ,$$ 
	 $${\widetilde g_2}\left( x \right) = (x - 1)^{r_{3}}(x - \omega)^{r_{1}}(x - \omega^2)^{r_{2}} $$ 
	 and 
	 $${\widetilde g_3}\left( x \right) = (x - 1)^{r_{2}}(x - \omega)^{r_{3}}(x - {\omega ^2})^{r_{1}}.$$ 
	 We denote that $\widetilde g_1\left( x \right)$, $\widetilde g_2\left( x \right)$ and $\widetilde g_3\left( x \right)$ represent the generator polynomials of $\widetilde {\mathcal{C}_1}$, $\widetilde {\mathcal{C}_2}$ and $\widetilde {\mathcal{C}_3}$, respectively.
	
	Let $y = \frac{x}{{{\omega ^2}}},z = \frac{x}{\omega }$, for the generator polynomial $\widetilde g_1\left( x \right)$ of $\widetilde{{\mathcal{C}}_1}$, we can deduce the following results by deforming it.
	\[\begin{gathered}
		{\widetilde g_1}(x) = {(x - 1)^{{r_1}}}{(x - \omega )^{{r_2}}}{(x - {\omega ^2})^{{r_3}}} \hfill \\
		\;\;\;\;\;\;\;\;\;\;\,= {\omega ^{2\left( {{r_1} + {r_2} + {r_3}} \right)}}{(\frac{x}{{{\omega ^2}}} - \frac{1}{{{\omega ^2}}})^{{r_1}}}{(\frac{x}{{{\omega ^2}}} - \frac{\omega }{{{\omega ^2}}})^{{r_2}}}{(\frac{x}{{{\omega ^2}}} - \frac{{{\omega ^2}}}{{{\omega ^2}}})^{{r_3}}} \hfill \\
		\;\;\;\;\;\;\;\;\;\;\,= {\omega ^{2\left( {{r_1} + {r_2} + {r_3}} \right)}}{(y - \omega )^{{r_1}}}{(y - {\omega ^2})^{{r_2}}}{(y - 1)^{{r_3}}} \hfill \\
		\;\;\;\;\;\;\;\;\;\;\,={\omega ^{2\left( {{r_1} + {r_2} + {r_3}} \right)}}{\widetilde g_{2}}(y) \hfill \\
		\;\;\;\;\;\;\;\;\;\;\,= {\omega ^{{r_1} + {r_2} + {r_3}}}{(\frac{x}{\omega } - \frac{1}{\omega })^{{r_1}}}{(\frac{x}{{{\omega ^2}}} - \frac{\omega }{\omega })^{{r_2}}}{(\frac{x}{\omega } - \frac{{{\omega ^2}}}{\omega })^{{r_3}}} \hfill \\
		\;\;\;\;\;\;\;\;\;\;\,= {\omega ^{{r_1} + {r_2} + {r_3}}}{(z - {\omega ^2})^{{r_1}}}{(z - 1)^{{r_2}}}{(z - \omega )^{{r_3}}} \hfill \\ 
		\;\;\;\;\;\;\;\;\;\;\,= {\omega ^{{r_1} + {r_2} + {r_3}}}{\widetilde g_3}(z). \hfill \\
	\end{gathered} \]
	\noindent Thus, repeated-root cyclic codes $\widetilde{\mathcal{C}} $ are the same codes for $i=0,1,2$.
	
	Next, since  $\omega$ and $\omega^2$ are primitive 3-th root of unity in ${\mathbb F}_{p}$, we have   $\widetilde {\mathcal{C}_1} $ and  $\widetilde {\mathcal{C}_4} $  have the same minimum symbol-pair distance, where the generator polynomial of $\widetilde {\mathcal{C}_4}$ is
	 $$\widetilde g_{4}(x)= (x - 1)^{r_{1}}(x - \omega)^{r_{3}}(x - {\omega ^2})^{r_{2}}.$$
Therefore, all $\widetilde{ \mathcal{C}}$ have the same minimum symbol-pair distance, when  the exponents of the three factors exchanged with each other.	
\end{proof}
The above Remark \ref{proposion 3.9} shows that the exponential positions of the three factors $x-1$, $x-\omega$ and $x-\omega^2$ of the generator polynomial of 
$ \mathcal{C}_{(r_{1},r_{2},r_{3})} $ have the same minimum symbol-pair distance. Without loss of generality, suppose that $$p-1 \ge r_{1}\ge r_{2}\ge r_{3}\ge 0$$ in the next part of this subsection. Then we have the following theorem.
\begin{theorem}\label{theorem of 3p MDS}
	$ \mathcal{C}_{(r_{1},r_{2},r_{3})} $ is an MDS symbol-pair codes over ${\mathbb F}_{p}$, if one of the following two conditions is true
	\begin{enumerate}
		\item $r_{1}\le 5$, $0\le r_{2}-r_{3}\le 1$ and $ {r_{1}} = r_{2}+r_{3}$,
		\item $r_{1}< 5$, $0\le r_{2}-r_{3}\le 1$ and $ {r_{1}} = r_{2}+r_{3}+1$.
	\end{enumerate}
\end{theorem}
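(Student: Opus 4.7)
The plan is to show that for each admissible tuple $(r_1,r_2,r_3)$ the symbol-pair distance attains the Singleton bound $d_p = \deg(g_{(r_1,r_2,r_3)}) + 2$. Since $\deg g_{(r_1,r_2,r_3)} = r_1+r_2+r_3$, condition~(1) gives $\deg g = 2r_1$ while condition~(2) gives $\deg g = 2r_1-1$, so MDS amounts to proving $d_p \ge 2r_1+2$ (respectively $d_p \ge 2r_1+1$); the reverse inequality is Lemma~\ref{lemma 3.2 }.

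Under the normalisation $p-1 \ge r_1 \ge r_2 \ge r_3 \ge 0$ and $0\le r_2-r_3\le 1$, the list of admissible tuples is finite: condition~(1) yields $(1,1,0),(2,1,1),(3,2,1),(4,2,2),(5,3,2)$, and condition~(2) yields $(1,0,0),(2,1,0),(3,1,1),(4,2,1)$. Several are already disposed of: $(2,1,1)$ by Proposition~\ref{Proposition 3.2}, $(3,2,1)$ by Lemma~\ref{lemma 3.6 }, and the three tuples with $\deg g \le 3$ by direct inspection, since they correspond essentially to classical (near-)MDS Reed--Solomon--type codes of very small redundancy. What remains is the four substantive tuples $(3,1,1),(4,2,1),(4,2,2),(5,3,2)$.

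For each of these I would first apply Lemma~\ref{lemma 3.1} with $l=3$, $e=1$, $m_1(x)=x-1$, $m_2(x)=x-\omega$, $m_3(x)=x-\omega^2$. The auxiliary codes $\overline{\mathcal{C}}_t = \langle \prod_{e_i > t} m_i \rangle$ live in $\mathbb{F}_p[x]/(x^3-1)$, and their Hamming distances are immediate from the number of surviving factors (namely $2$ if one factor survives, $3$ if two survive, and trivial if all three survive). Minimising $P_t \cdot d_H(\overline{\mathcal{C}}_t)$ yields $d_H = 4, 6, 6, 8$ for the four tuples in the above order. Since none of these codes is classical MDS, Lemma~\ref{lemma 3.3 } upgrades the bound to $d_p \ge d_H + 2$, leaving a gap of $1$ for $(3,1,1)$ and $(4,2,1)$ and a gap of $2$ for $(4,2,2)$ and $(5,3,2)$ that must still be closed.

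Closing this gap is the main obstacle and proceeds by the shape-analysis template developed in Propositions~\ref{Proposition 3.2}--\ref{Proposition 3.4}, Corollary~\ref{theorem 3.6}, and Theorem~\ref{theorem3.9}. For each symbol-pair weight $w$ strictly below the Singleton target I would enumerate the decompositions of the nonzero coordinates into cyclic runs (a run of length $k$ contributes $k+1$ to the symbol-pair weight), place the resulting pattern $(\star,\dots,\star,0_{s_1},\star,\dots,\star,0_{s_2},\dots)$ after cyclic shift and scaling so that the constant term equals $1$, and write $c(x) = 1 + a_1 x^{i_1} + \cdots + a_m x^{i_m}$. The divisibility conditions $c^{(j)}(1)=0$ for $0\le j < r_1$, $c^{(j)}(\omega)=0$ for $0\le j < r_2$, and $c^{(j)}(\omega^2)=0$ for $0\le j < r_3$ then give a system in the $a_i$ whose consistency forces one of the following contradictions: a nonzero $a_i$ must vanish; a gap $i_a - i_b$ must be divisible by $3p$, impossible since $\deg c < 3p$; $\omega^k = 1$ for some $k$ with $3\nmid k$, contradicting that $\omega$ is a primitive cube root of unity; or an identity forcing $p\mid 2$, contradicting $p$ odd. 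The heaviest bookkeeping is for $(5,3,2)$, where symbol-pair weights $10$ and $11$ must be ruled out across all shape patterns, but each sub-case reduces to the same routine; sharpness is then confirmed by exhibiting an explicit codeword realising the target weight.
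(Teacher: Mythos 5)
Your overall strategy (enumerate the admissible tuples, compute $d_H$ via Lemma \ref{lemma 3.1}, then exclude low symbol-pair weights by run-pattern analysis) is the same template the paper follows, but there are two genuine problems. First, your application of Lemma \ref{lemma 3.1} is incorrect: when listing the Hamming distances of the auxiliary codes $\overline{\mathcal{C}}_t$ you omit the case in which \emph{no} factor survives, i.e. $t \ge r_1$, where $\overline{\mathcal{C}}_t$ is all of $\mathbb{F}_p[x]/\langle x^3-1\rangle$ with $d_H(\overline{\mathcal{C}}_t)=1$ and $P_t=t+1$. The term $t=r_1$ contributes $r_1+1$ to the minimum, and this is exactly what determines $d_H$ for three of your four substantive tuples: the correct values for $(3,1,1),(4,2,1),(4,2,2),(5,3,2)$ are $d_H=4,5,5,6$, not $4,6,6,8$ (the paper itself records $\mathcal{C}_{(4,2,1)}$ as a $[3p,\,3p-7,\,5]$ code). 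Consequently the gaps to be closed are $1,2,3,4$ rather than $1,1,2,2$, and for $(4,2,2)$ and $(5,3,2)$ the Singleton target coincides with the absolute ceiling $d_p\le 2d_H$, so every minimum-Hamming-weight codeword must have all of its nonzero coordinates isolated --- a far more delicate situation than your numbers suggest. (Note also that for $(4,2,1)$ the paper obtains $d_p\ge 8$ not from Lemma \ref{lemma 3.3 } but from the subcode relation with $\mathcal{C}_{(3,2,1)}$ via Lemma \ref{lemma 3.6 }.)

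Second, the entire substance of the theorem lies in the case analyses you defer to ``the same routine.'' The paper does not rerun them: it cites \cite{Chen-2017-CIT} for $(2,1,0),(2,1,1),(3,1,1),(3,2,1)$ and \cite{Ma-2022-DCC} for $(4,2,2),(5,3,2)$, and proves only the genuinely new tuple $(4,2,1)$ in Proposition \ref{Proposition C_{421}} --- which by itself requires three Hamming-weight cases, several run-pattern subcases, and two tables of residue subcases of the exponents modulo $3$. The tuples $(4,2,2)$ and $(5,3,2)$ are the main results of an entire separate paper. Until these exclusions are actually carried out (or correctly attributed), your argument establishes only the Singleton upper bound $d_p\le \deg g+2$ together with a lower bound that, once $d_H$ is corrected, falls well short of the target; as written the proof therefore has a genuine gap.
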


Researchers in \cite{Chen-2017-CIT} and \cite{Ma-2022-DCC} given some proofs of Theorem \ref{theorem of 3p MDS}, and the Theorem \ref{theorem 3.1} in the previous paper also includes some proofs. Here we only need to prove that ${\mathcal{C}_{(4,2,1)}} $ is an MDS symbol-pair code.

\begin{Proposition}\label{Proposition C_{421}}
${\mathcal{C}_{(4,2,1)}} $ is an MDS ${\left( {3p,\;9} \right)_p}$ symbol-pair code.	
\end{Proposition}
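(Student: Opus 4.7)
The plan is to prove $d_p(\mathcal{C}_{(4,2,1)}) = 9$ by combining the Singleton-type upper bound with a case analysis ruling out codewords of symbol-pair weight $8$. Since $\deg g_{(4,2,1)} = 7$, the dimension is $3p - 7$ and Lemma \ref{lemma 3.2 } gives $d_p \le 9$ immediately. For the matching lower bound, the first observation will be that $\mathcal{C}_{(4,2,1)} \subseteq \mathcal{C}_{(3,2,1)}$, so by Lemma \ref{lemma 3.6 } we already have $d_p \ge 8$, reducing the problem to excluding codewords with $\omega_p = 8$. Attainment of $d_p = 9$ can be certified concretely by $c(x) = (x^p - 1)(x^p - \omega)(x - \omega^2)$: this lies in $\mathcal{C}_{(4,2,1)}$ because $(x-1)^p = x^p - 1$ and $(x-\omega)^p = x^p - \omega$ in characteristic $p$ (using $\omega^p = \omega$ since $p \equiv 1 \pmod 3$), and its expansion is supported on the six positions $\{0, 1, p, p+1, 2p, 2p+1\}$, comprising three consecutive pairs with mutual gaps of $p - 2 \ge 3$ zeros, so $\omega_p(c) = 9$.

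Now suppose toward contradiction that some $c \in \mathcal{C}_{(4,2,1)}$ satisfies $\omega_p(c) = 8$. Writing $h = \omega_H(c)$ and letting $r$ denote the number of maximal cyclic runs of consecutive nonzero coordinates, we have $h + r = 8$, yielding the four sub-cases $(h, r) \in \{(7, 1),\, (6, 2),\, (5, 3),\, (4, 4)\}$. Case $(7, 1)$ is immediate: after a cyclic shift $c(x) = \sum_{j=0}^{6} a_j x^j$ has degree at most $6$, strictly less than $\deg g_{(4,2,1)} = 7$; since a nonzero codeword lifts to a polynomial of degree less than $3p$ divisible by $g_{(4,2,1)}$, this forces $c = 0$. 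For each of the remaining three cases, I will write the general form of $c(x)$ (normalizing the lowest nonzero coefficient to $1$), impose the seven linear conditions $c^{(i)}(1) = 0$ for $i = 0, 1, 2, 3$, $c^{(j)}(\omega) = 0$ for $j = 0, 1$, and $c(\omega^2) = 0$ coming from $g_{(4,2,1)} \mid c$, and derive a contradiction via the same kind of manipulations used in Propositions \ref{Proposition 3.2}--\ref{Proposition 3.4} and Theorem \ref{theorem3.9}. The recurring mechanism is that the derivative conditions at $x = 1$, together with $p$ being an odd prime, constrain the run-gap lengths to be congruent to $0 \pmod p$ in restrictive patterns; the conditions at $x = \omega$ and $x = \omega^2$ then force $\omega^k = 1$ for some $1 \le k \le 2$, contradicting $\omega$ being a primitive cube root of unity.

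The main obstacle will be sub-case $(4, 4)$, where $c(x) = \sum_{s=1}^{4} a_s x^{i_s}$ is a sum of four isolated monomials with pairwise cyclic gaps at least $2$. The four vanishing conditions at $x = 1$ reduce, by elementary linear combinations, to $\sum_{s=1}^{4} a_s i_s^{\,j} = 0$ for $j = 0, 1, 2, 3$, a Vandermonde-type system in $i_1, \ldots, i_4 \bmod p$; if these residues are pairwise distinct the $4 \times 4$ Vandermonde matrix is invertible and forces $a_s = 0$ for all $s$, an immediate contradiction. The delicate situation is when two or more of the $i_s$ coincide mod $p$, because the Vandermonde block drops rank and the $\omega$- and $\omega^2$-evaluation conditions must be brought in to close the argument, exploiting that $\omega^{i_s}$ depends only on $i_s \bmod 3$, that $\omega^p = \omega$, and that $p > 3$ is an odd prime. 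A careful elimination should then force the positions to collapse in a way incompatible with the assumed gap structure, completing the proof.
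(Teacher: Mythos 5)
Your overall strategy coincides with the paper's: get $d_p\ge 8$ from the inclusion $\mathcal{C}_{(4,2,1)}\subseteq\mathcal{C}_{(3,2,1)}$ and Lemma \ref{lemma 3.6 }, exhibit the symbol-pair-weight-$9$ codeword $(x^p-1)(x^p-\omega)(x-\omega^2)$ (your certificate is correct and is essentially the one the paper uses elsewhere), and rule out $\omega_p=8$ by a case analysis on the run structure, with the single-run case killed by the degree bound $\deg c<\deg g_{(4,2,1)}=7$. However, there is a genuine gap in the case you yourself identify as the main obstacle, $(h,r)=(4,4)$. This case should not be attacked head-on: it is vacuous, because Lemma \ref{lemma 3.1} gives $d_H(\mathcal{C}_{(4,2,1)})=5$ (the minimum of $P_t\cdot d_H(\overline{\mathcal{C}}_t)$ is attained at $t=4$ with $P_4=5$ and $\overline{\mathcal{C}}_4$ trivial), so no codeword of Hamming weight $4$ exists. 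You never compute $d_H$, so you are forced into the Vandermonde argument, and that argument is incomplete exactly where it matters: positions congruent modulo $p$ do occur in genuine codewords (your own weight-$6$ certificate is supported on $\{0,1,p,p+1,2p,2p+1\}$, with three positions in each residue class mod $p$), so the degenerate rank-drop case is not exotic, and ``a careful elimination should then force the positions to collapse'' is an assertion, not a proof. The fix is one line via Lemma \ref{lemma 3.1}, which is how the paper proceeds (it records $\mathcal{C}_{(4,2,1)}$ as a $[3p,\,3p-7,\,5]$ code before starting the case analysis).

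Separately, the cases $(h,r)=(5,3)$ and $(6,2)$ are the actual substance of the proposition: they split into the run-length patterns $(2,2,1)$, $(3,1,1)$ and $(5,1)$, $(4,2)$, $(3,3)$, and each requires a careful elimination organized by the residues of the gap positions modulo $3$ (to control $\omega^{i}$) and modulo $p$ (to control the derivative conditions at $1$); the paper needs two tables and three subcases, including one branch where the conditions pin down $t=3+2\omega$ before the third derivative at $1$ yields the contradiction. Your sketch names the right mechanism but does not carry any of it out, so as written the proposal establishes only $8\le d_p\le 9$ plus the easy $(7,1)$ case.
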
 
\begin{proof}
Since  $\mathcal{C}_{(4,2,1)}= \left\langle {{{(x - 1)}^4}{{(x - \omega )}^2}{{(x - {\omega ^2})}}} \right\rangle ,$ for any codeword $c \in \mathcal{C}_{(4,2,1)}$, we have $$c\left( 1 \right) = c\left( \omega  \right) = c\left( {  \omega^2 } \right) = c^{(1)}\left( 1 \right)= c^{(1)}\left( \omega \right) = c^{(2)}\left( 1 \right) =c^{(3)}\left( 1 \right) = 0.$$
By Lemma \ref{lemma 3.1}, $\mathcal{C}_{(4,2,1)}$ is a $\left[ {3p,\;3p - 7,\;5} \right]$ cyclic code over ${{\mathbb F}_p}$. Since $\mathcal{C}_{(4,2,1)}$ is a subcode of Lemma \ref{lemma 3.6 }, we have ${d_p} \geqslant 8$.

To prove that $\mathcal{C}_{(4,2,1)}$ is an MDS ${\left( {3p,\;9} \right)_p}$ symbol-pair code, it suffices to verify that there does not exist codeword in $\mathcal{C}_{(4,2,1)}$ with symbol-pair weight 8. Then we have three cases to discuss.
	
	\item[\textbf{Case I.}] If there are codewords with Hamming weight $5$ and symbol-pair weight $8$, then its certain
	cyclic shift must be one of the following forms
	$$ \left( { \star , \star ,{0_{s_1}}, \star ,\star,{0_{s_2}}, \star ,{0_{s_3}}} \right)$$ or
	$$ \left( { \star ,\star , \star ,{0_{s_1}}, \star ,{0_{s_2}}, \star ,{0_{s_3}}} \right),$$
	where each $ \star $ denotes an element in ${\mathbb F}_p^{\text{*}}$ and  ${0_{s_1}}$, ${0_{s_2}}$,${0_{s_3}}$ are all-zero vectors  with lengths $s_1$, $s_2$ and $s_3$ respectively.
	
\item 
$  {\bf{Subcase\; 1.1.}}$ For the case of $$ \left( { \star , \star ,{0_{s_1}}, \star ,\star,{0_{s_2}}, \star ,{0_{s_3}}} \right),$$ without loss of generality, suppose that the constant term of $c\left( x \right)$ is 1. We denote that

$$c\left( x \right) = 1 + {a_1}x + {a_2}{x^l} + {a_3}{x^{l + 1}} + {a_4}{x^t}.$$

When $t \equiv 0({\bmod~3}) $ and $l \equiv 0({\bmod~3}) $, it follows from $c\left( 1 \right) = c\left( { \omega} \right) =c\left( { \omega^2} \right) = 0$ that
\begin{equation*}  
	\left\{  
	\begin{array}{lr}  
		1+{a_1} + {a_2} + {a_3} + {a_4} = 0, &  \\  
		1+{a_1}\omega + {a_2} + {a_3}\omega + {a_4} = 0, &\\  
		1+{a_1}\omega^2 + {a_2} + {a_3}\omega^2 + {a_4}  = 0. &    
	\end{array}  
	\right.  
\end{equation*}
By solving the system, we have	 $ a_1=-a_3 $. However, $c^{(1)}\left( 1 \right) =c^{(1)}\left( \omega \right) =0$ induces that
\begin{equation*}  
	\left\{  
	\begin{array}{lr}   
		{a_1} + t{a_2} +\left( t+1\right) {a_3} + l{a_4}  = 0, &\\
		{a_1} + t{a_2}\omega^2 +\left( t+1\right) {a_3} + l{a_4}\omega^2  = 0. &    
	\end{array}  
	\right.  
\end{equation*}
Together with $  a_1=-a_3 $, one can immediately get
can get $t(1-\omega^2)a_3=0$, which is impossible, since $t \equiv 0\left( {\bmod~ 3} \right)$ and the code length is $3p$. 

When $l \equiv i({\bmod~3}) $ and $t \equiv j({\bmod~3}),\, i,j=0,1,2 $,
values in all $i$  and $j$ of ${\bf{Subcase\; 1.1}}$ are shown in the following Table \ref{table-2}.

\begin{table}[h]\label{table2}
		\centering
	\begin{threeparttable}[b]
		\caption{Summary of  ${\bf{Subcase\; 1.1}}$}\label{table-2}
			\begin{tabular}{@{}lllll@{}}			
				\toprule
				
				$i$ & $j$ & Conditions &\makecell{Results} & Contradictory \\
				\midrule
				0 & 0 &$\left[\kern-0.15em\left[ 1 
				\right]\kern-0.15em\right],\left[\kern-0.15em\left[ 2 
				\right]\kern-0.15em\right]$ & \makecell{$t(1-\omega^2)a_3=0$} & $t\le  3p-2$ \\ 
				\midrule
				0 & 1 &$\left[\kern-0.15em\left[ 1 
				\right]\kern-0.15em\right],\left[\kern-0.15em\left[ 2 
				\right]\kern-0.15em\right],\left[\kern-0.15em\left[ 3 
				\right]\kern-0.15em\right]$& \makecell{$\omega^2-1=0 $  } & $\omega^3=1$ \\ 
				\midrule
				0 & 2 &$\left[\kern-0.15em\left[ 1 
				\right]\kern-0.15em\right],\left[\kern-0.15em\left[ 3 
				\right]\kern-0.15em\right]$ & \makecell{$ a_1+a_3+a_4=0,$\\$a_1+a_3-a_4=0 $ }& $a_4\in {\mathbb F}_p^{\text{*}}$\\
				\midrule
				1 & 0 &$\left[\kern-0.15em\left[ 1 
				\right]\kern-0.15em\right],\left[\kern-0.15em\left[ 3 
				\right]\kern-0.15em\right]$ & \makecell{$ a_1+a_2+a_3=0,$\\$a_1+a_2-a_3=0 $ }& $a_3\in {\mathbb F}_p^{\text{*}}$\\
				\midrule
				1 & 1 & $\left[\kern-0.15em\left[ 1 
				\right]\kern-0.15em\right],\left[\kern-0.15em\left[ 3 
				\right]\kern-0.15em\right]$ & \makecell{$ 3=0 $ }& $p\ne 3$\\ 
				\midrule
				1 & 2 & $\left[\kern-0.15em\left[ 1 
				\right]\kern-0.15em\right],\left[\kern-0.15em\left[ 3 
				\right]\kern-0.15em\right]$ & \makecell{$ 3=0 $ }& $p\ne 3$\\ 
				\midrule
				2 & 0 & $\left[\kern-0.15em\left[ 1 
				\right]\kern-0.15em\right],\left[\kern-0.15em\left[ 3 
				\right]\kern-0.15em\right]$ & \makecell{$ a_1-a_2=0 $,\\$a_1+a_2=0$ }& $a_1,a_2 \in {\mathbb F}_p^{\text{*}}$.\\ 
				\midrule
				2 & 1 & $\left[\kern-0.15em\left[ 1 
				\right]\kern-0.15em\right],\left[\kern-0.15em\left[ 3 
				\right]\kern-0.15em\right]$ & \makecell{$a_1+a_2+a_4=0,$\\$a_1-a_2+a_4=0 $ }& $a_2 \in {\mathbb F}_p^{\text{*}}$\\ 
				\midrule
				2 & 2 &$\left[\kern-0.15em\left[ 1 
				\right]\kern-0.15em\right],\left[\kern-0.15em\left[ 3 
				\right]\kern-0.15em\right]$ &  \makecell{$a_1+a_2+a_4=0,$\\$a_1-a_2-a_4=0 $} &  $a_1 \in {\mathbb F}_p^{\text{*}}$ \\ 
					\midrule
			\end{tabular}
		\begin{tablenotes}
			\item * {\footnotesize Conditions $\left[\kern-0.15em\left[ 1 \right]\kern-0.15em\right]$, $\left[\kern-0.15em\left[ 2 \right]\kern-0.15em\right]$  and $\left[\kern-0.15em\left[ 3 \right]\kern-0.15em\right]$ represent $c\left( 1 \right) = c\left( { \omega} \right) =c\left( { \omega^2} \right) = 0$, $ c^{(1)}\left( 1 \right)= c^{(1)}\left( \omega \right) =0$ and  $c\left( 1 \right) + c\left( { \omega} \right) +c\left( { \omega^2} \right) = 0$, respectively.}
		\end{tablenotes}
	\end{threeparttable}
\end{table}

	\item 
	$  {\bf{Subcase\; 1.2.}}$ For the subcase of $$ \left( { \star , \star , \star ,{0_{s_1}},\star,{0_{s_2}}, \star ,{0_{s_3}}} \right),$$ without loss of generality, suppose that the constant term of $c\left( x \right)$ is 1. We denote that
	
	$$c\left( x \right) = 1 + {a_1}x + {a_2}{x^2} + {a_3}{x^l} + {a_4}{x^t}.$$
	
Suppose that $l \equiv i({\bmod~3}) $ and $t \equiv j({\bmod~3}),\, i,j=0,1,2 $, similar to ${\bf{Subcase\; 1.1}}$, we summarize all $i$ and $j$ of ${\bf{Subcase\; 1.2}}$ in the following Table \ref{table-3}.

	\begin{table}\label{table3}
	\centering
	\begin{threeparttable}[b]
		\caption{Summary of ${\bf{Subcase\; 1.2}}$}\label{table-3}
		\begin{tabular}{lllll}
			\hline
			$i$ & $j$ & Conditions  & \makecell{Results}& Contradictory \\
			\midrule
			0 & 0 &$\left[\kern-0.15em\left[ 1 
			\right]\kern-0.15em\right],\left[\kern-0.15em\left[ 2 
			\right]\kern-0.15em\right]$ & \makecell{$ a_1-a_2=0 $,\\$a_1+a_2=0$ }& $a_1,a_2 \in {\mathbb F}_p^{\text{*}}$\\  
			\midrule
			0 & 1 & $\left[\kern-0.15em\left[ 1 \right]\kern-0.15em\right],\left[\kern-0.15em\left[ 2 \right]\kern-0.15em\right]$ & \makecell{$ a_1-a_2+a_4=0 $,\\$a_1+a_2+a_4=0$ } & $a_2 \in {\mathbb F}_p^{\text{*}}$\\
			\midrule  
			0 & 2 & $\left[\kern-0.15em\left[ 1 \right]\kern-0.15em\right],\left[\kern-0.15em\left[ 2 \right]\kern-0.15em\right]$ & \makecell{$ a_1+a_2+a_4=0 $,\\$a_1-a_2-a_4=0$ } & $a_1 \in {\mathbb F}_p^{\text{*}}$\\
			\midrule
			1 & 1 & $\left[\kern-0.15em\left[ 1 \right]\kern-0.15em\right],\left[\kern-0.15em\left[ 2 \right]\kern-0.15em\right]$ & $\makecell{3=0}$ & $p\ne 3$ \\
			\midrule 
			1 & 2 & $\left[\kern-0.15em\left[1\right]\kern-0.15em\right],\left[\kern-0.15em\left[ 2 \right]\kern-0.15em\right]$ & $\makecell{3=0}$ & $p\ne 3$ \\
			\midrule
			2 & 2 & $\left[\kern-0.15em\left[1\right]\kern-0.15em\right],\left[\kern-0.15em\left[ 2 \right]\kern-0.15em\right]$ &\makecell{$3=0$} & $p\ne 3 $\\
			\hline
		\end{tabular}
		\begin{tablenotes}
			\item * {\footnotesize Conditions $\left[\kern-0.15em\left[ 1 
				\right]\kern-0.15em\right]$ and $\left[\kern-0.15em\left[ 2 \right]\kern-0.15em\right]$  represent $c\left( 1 \right) = c\left( { \omega} \right) =c\left( { \omega^2} \right) = 0$ and $c\left( 1 \right) + c\left( { \omega} \right) +c\left( { \omega^2} \right) = 0$,  respectively.}
		\end{tablenotes}
	\end{threeparttable}
\end{table}

	\item[\textbf{Case II.}] If there are codewords with Hamming weight $6$ and symbol-pair weight $8$, then its certain
	cyclic shift must be one of the following forms
	$$ \left(  \star , \star ,\star ,\star ,\star ,{0_{s_1}}, \star,{0_{s_2}} \right),$$ 
	$$ \left(  \star ,\star ,\star , \star ,{0_{s_1}}, \star ,\star ,{0_{s_2}}\right)\;\;$$ or 
	$$ \left(  \star ,\star ,\star , {0_{s_1}},\star ,\star , \star ,{0_{s_2}}\right),$$
	where each $ \star $ denotes an element in ${\mathbb F}_p^{\text{*}}$ and  ${0_{s_1}}$, ${0_{s_2}}$ are all-zero vectors   with lengths $s_1$ and $s_2$ respectively.
	
\item 	${\bf{Subcase\; 2.1.}}$ For the subcase of $$ \left( { \star , \star , \star,\star , \star ,{0_{s_1}},\star,{0_{s_2}}} \right),$$ without loss of generality, suppose that the constant term of $c\left( x \right)$ is 1. We denote 
	
	$$c\left( x \right) = 1 + {a_1}x + {a_2}{x^2} + {a_3}{x^{ 3} }+{a_4}{x^4} + {a_5}{x^{ t} } .$$

	 When $t \equiv 0\left( {\bmod~3} \right)$, it can be derived from
	 	\begin{equation*}  
	 	\left\{  
	 	\begin{array}{lr}  
	 		c\left( 1 \right) = c\left( { \omega} \right) =c\left( { \omega^2} \right) = 0, &  \\  
	 		c\left( 1 \right) + c\left( { \omega} \right) +c\left( { \omega^2} \right) = 0, &
	 	\end{array}  
	 	\right.  
	 \end{equation*}
	  that
		\begin{equation*}  
			\left\{  
			\begin{array}{lr}  
				{a_1}+{a_2}+{a_4}=0, &  \\  
				{a_1}-{a_2}+{a_4}=0, &
			\end{array}  
			\right.  
		\end{equation*}
	 which is impossible, since $a_{2} \in {\mathbb F}_p^{\text{*}}$ and $p$ is an odd prime.				
		
	 When $t \equiv 1\left( {\bmod~3} \right)$, with arguments similar to $t \equiv 0\left( {\bmod~3} \right)$, a contradiction can be obtained
		from		 	
		\begin{equation*}  
			\left\{  
			\begin{array}{lr}  
				c\left( 1 \right) = c\left( { \omega} \right) =c\left( { \omega^2} \right) = 0, &  \\  
				c\left( 1 \right) + c\left( { \omega} \right) +c\left( { \omega^2} \right) = 0. &
			\end{array}  
			\right.  
		\end{equation*}	
		
	When $t \equiv 2\left( {\bmod~3} \right)$, it follows from $c\left( 1 \right) = c\left( { \omega} \right) =c\left( { \omega^2} \right) = 0$ that
		\begin{equation*}  
			\left\{  
			\begin{array}{lr}  
				1+{a_1}+{a_2}+{a_3}+{a_4}+{a_5}=0, &  \\  
				1+{a_1}\omega+{a_2}\omega^2+{a_3}+{a_4}\omega+{a_5}\omega^2=0, &\\  
				1+{a_1}\omega^2+{a_2}\omega+{a_3}+{a_4}\omega^2+ {a_5}\omega=0. &    
			\end{array}  
			\right.  
		\end{equation*}
		By solving the system, we have $1+a_3=0,a_1+a_4=0,a_2+a_5=0$. Then $c^{(1)}\left( 1 \right)= c^{(1)}\left( \omega \right) =0$  indicates
		\begin{equation*}  
			\left\{  
			\begin{array}{lr}  
				{a_1} + 2{a_2} +3 {a_3} + 4{a_4}+ t{a_5}  = 0, &\\
				{a_1} + 2{a_2}\omega +3 {a_3}\omega^2 + 4{a_4}+ t{a_5} \omega^2 = 0, &    
			\end{array}  
			\right.  
		\end{equation*}
		which means that ${a_5} = {{3{a_3}\omega^2 } \over {t - 2}}= {{3{a_4}\omega } \over {t - 2}}$ (since $t \equiv 2\left( {\bmod~3} \right)$, then $p$ is not a divisor of $t-2$, otherwise $t-2 \ge 3p$).
		
	\noindent	By $ c^{(2)}\left( 1 \right) =0$, we have $t = 3+2\omega$. Together with ${a_5} = {{3{a_3}\omega^2 } \over {t - 2}}= {{3{a_4}\omega } \over {t - 2}}$ and $c^{(3)}\left( 1 \right) = 0$, one can derive that
		$$ 6+24\omega+3t(t-1)\omega^2=0.$$ 
		Then we have $$2+8\omega+(3+2\omega)(2+2\omega)\omega^2=0.$$ Combining with $\omega^2=-1-\omega$, we can obtain $3\omega^2=0$. This is impossible.

\item 	${\bf{Subcase\; 2.2.}}$ For the subcase of $$ \left( { \star , \star , \star,\star ,{0_{s_1}}, \star ,\star,{0_{s_2}}} \right),$$ without loss of generality, suppose that the constant term of $c\left( x \right)$ is 1. We denote that
	
	$$c\left( x \right) = 1 + {a_1}x + {a_2}{x^2} + {a_3}{x^{ 3} }+{a_4}{x^t} + {a_5}{x^{ t+1} } .$$

		 When $t \equiv 0\left( {\bmod~3} \right)$  and $t \equiv 2\left( {\bmod~3} \right)$, with arguments similar to the previous $t \equiv 0\left( {\bmod~3} \right)$ of ${\bf{Subcase\; 2.1}}$,	a contradiction can be derived from
		$c\left( 1 \right) = c\left( { \omega} \right) =c\left( { \omega^2} \right) = 0$ again.	
		
	 When $t \equiv 1\left( {\bmod~3} \right)$, with arguments similar to $t \equiv 2\left( {\bmod~3} \right)$ of ${\bf{Subcase\; 2.1}}$, $1+a_3=0,a_1+a_4=0,$ and $a_2+a_5=0$ can be obtained from $c\left( 1 \right) = c\left( { \omega} \right) =c\left( { \omega^2} \right) = 0$.
		
\noindent Then ${a_5} ={a_4}\omega= {{3{a_3}\omega^2 } \over {t - 1}}$ can be derived from $c^{(1)}\left( 1 \right)= c^{(1)}\left( \omega \right) =0$. $c^{(2)}\left( 1 \right)=0$ means $t=-\omega$. Finally, combined with $c^{(3)}\left( 1 \right)=0$, we have $\omega^2-\omega = 0$, a contradiction again.			

\item 	${\bf{Subcase\; 2.3.}}$ For the subcase of $$ \left( { \star , \star , \star,{0_{s_1}},\star , \star ,\star,{0_{s_2}}} \right),$$ without loss of generality, suppose that the constant term of $c\left( x \right)$ is 1. We denote that
	
	$$c\left( x \right) = 1 + {a_1}x + {a_2}{x^2} + {a_3}{x^{ t} }+{a_4}{x^{t+1}} + {a_5}{x^{ t+2} } .$$

When $t \equiv 0\left( {\bmod~3} \right)$, it follows from $c\left( 1 \right) = c\left( { \omega} \right) =c\left( { \omega^2} \right) = 0$ that
	\begin{equation*}  
		\left\{  
		\begin{array}{lr}  
			1+{a_1}+{a_2}+{a_3}+{a_4}+{a_5}=0, &  \\  
			1+{a_1}\omega+{a_2}\omega^2+{a_3}+{a_4}\omega+{a_5}\omega^2=0, &\\  
			1+{a_1}\omega^2+{a_2}\omega+{a_3}+{a_4}\omega^2+ {a_5}\omega=0. &    
		\end{array}  
		\right.  
	\end{equation*}
	By solving the system, we have  $1+a_3=0,a_1+a_4=0,a_2+a_5=0$.	
	Then $c^{(1)}\left( 1 \right)= c^{(1)}\left( \omega \right) =0$  indicates
	\begin{equation*}  
		\left\{  
		\begin{array}{lr}  
			{a_1} + 2{a_2} +t {a_3} + \left( t+1\right) {a_4}+ \left( t+2\right) {a_5}  = 0, &\\
			{a_1} + 2{a_2}\omega +t {a_3}\omega^2 + \left( t+1\right) {a_4}+ \left( t+2\right){a_5} \omega = 0, &    
		\end{array}  
		\right.  
	\end{equation*}
	which means that $  {a_5} ={a_4}\omega=a_3\omega^2$. Then $c^{(2)}\left( 1 \right)= 0$ implies that
	$$2{a_2} +t\left( t-1\right)  {a_3} +t \left( t+1\right) {a_4}+ \left( t+1\right) \left( t+2\right) {a_5}  = 0,$$
	which implies $t(3\omega^2+\omega-1=0) $. Since  $t \equiv 0\left( {\bmod 3} \right), \omega=-\omega^2-1$ and the code length $3p$, we have $2(\omega^2-1)=0$, a contradiction.				
	
	When $t \equiv 1\left( {\bmod~3} \right)$, with arguments similar to the $t \equiv 0\left( {\bmod~3} \right)$ of $\bf{Subcase\; 2.3}$, by 
		\begin{equation*}  
		\left\{  
		\begin{array}{lr}  
			c\left( 1 \right) = c\left( { \omega} \right) =c\left( { \omega^2} \right) = 0, &\\
			c^{(1)}\left( 1 \right)= c^{(1)}\left( \omega \right) =0, &    
		\end{array}  
		\right.  
	\end{equation*}
 we have ${a_4} = {a_3}\omega= {{{a_5}\omega^2{\left( t +2\right) } } \over {t-1}}$. Together with $ c^{(2)}\left( 1 \right)  = 0$, $\omega^2-\omega =0$ can be derived, which is impossible.

When $t \equiv 2\left( {\bmod~3} \right)$, similarly, we can derive ${a_5}={a_4}\omega =  {{{a_3}\omega^2{\left( t - 2\right) } } \over {t+1}}$ from  
		\begin{equation*}  
	\left\{  
	\begin{array}{lr}  
		c\left( 1 \right) = c\left( { \omega} \right) =c\left( { \omega^2} \right) = 0, &\\
		c^{(1)}\left( 1 \right)= c^{(1)}\left( \omega \right) =0. &    
	\end{array}  
	\right.  
\end{equation*}
Then, together with $ c^{(2)}\left( 1 \right)  = 0$, we have $t =p+1$. It follows from $c^{(1)}\left( 1 \right)= c^{(1)}\left( \omega \right) =0$ that
	 	\begin{equation*}  
	 	\left\{  
	 	\begin{array}{lr}  
	 		(t-1) {a_3} + \left( t+1\right) {a_4}+ \left( t+1\right) {a_5}  = 0, &\\
	 		(t-1) {a_3}\omega + \left( t+1\right) {a_4}\omega^2+ \left( t+1\right){a_5}  = 0. &    
	 	\end{array}  
	 	\right.  
	 \end{equation*}
	 Combined with $t =p+1$,  we have $\omega^2=1$, which contradicts that $\omega$ is primitive 3-th root of unity in ${\mathbb F}_{p}$.
	

	\item[\textbf{Case III.}] If there are codewords in $\mathcal{C}$ with Hamming weight 7 and symbol-pair weight 8, then its certain cyclic shift must have the form $$ \left( { \star , \star , \star , \star ,\star , \star , \star ,{0_s}} \right),$$ where each $ \star $ denotes an element in  ${\mathbb F}_p^{\text{*}}$ and ${0_s}$ is all-zero vector of length $s$. Without loss of generality, suppose that the constant term of $c\left( x \right)$ is 1. We denote
	
	$$c\left( x \right) = 1 + {a_1}x + {a_2}{x^2} + {a_3}{x^3} + {a_4}{x^4} + {a_5}{x^5}+ {a_6}{x^6}.$$
	This leads to $\deg\left( {c\left( x \right)} \right) = 6 < 7 = \deg\left( {g\left( x \right)} \right)$.


As a result,  $\mathcal{C}_{(4,2,1)}$ is an MDS ${\left( {3p,{\text{\;}}9} \right)_p}$ symbol-pair code.	
\end{proof}

Based on Reference \cite{Chen-2017-CIT}, \cite{Zhao-2019-Doc}, \cite{Ma-2022-DCC} , Remark \ref{proposion 3.9}, Proposition \ref{Proposition C_{421}} to Proposition \ref{proposition of C_{3rr}} and Corollary \ref{coro dp=11} in this paper, all known MDS symbol-pair codes with $n-k\le 10$ from repeated-root cyclic codes  ${\mathcal{C}_{(r_{1},r_{2},r_{3})}}, $ which are listed in the following Table \ref{table-4}.
\begin{table}[h]\label{table4}
	\centering
	\begin{threeparttable}[b]
		\caption{All MDS symbol-pair codes with length $3p$ for $d_{p}\le 12$}\label{table-4}
		\begin{tabular}{@{}lllll@{}}			
			\toprule
			
			$r_1$ & $r_2$ & $r_3$ & \makecell{$(n-k,d_{p})_{p}$} & Reference or Proposition \\
			\midrule	
			0 & 2 & 0 & \makecell{$(2,4)_{p}$} & Trivially($r_1= 0$)\\
			\midrule  
			2 & 1 & 0 & \makecell{$(3,5)_{p}$} & Reference \cite{Chen-2017-CIT} \\
			\midrule
			2 & 1 & 1 & \makecell{$(4,6)_{p}$} & Reference \cite{Chen-2017-CIT} \\
			\midrule			
			3 & 1 & 0 & \makecell{$(4,6)_{p}$} & Reference \cite{Zhao-2019-Doc} \\ 
			\midrule					
			3 & 1 & 1 & \makecell{$(5,7)_{p}$} & Reference \cite{Chen-2017-CIT}\\ 			
			\midrule
			3 & 2 & 1 & \makecell{$(6,8)_{p}$} &Reference \cite{Chen-2017-CIT}\\
			\midrule
			4 & 2 & 2 & \makecell{$(8,10)_{p}$}&Reference \cite{Ma-2022-DCC} \\ 
			\midrule
			5 & 3 & 2 & \makecell{$(10,12)_{p}$} &Reference \cite{Ma-2022-DCC} \\
			\midrule
			2 & 2 & 0 & \makecell{$(4,6)_{p}$} & Proposition \ref{Proposition 3.4} \\
			\midrule
			4 & 2 & 1 & \makecell{$(7,9)_{p}$} &Proposition \ref{Proposition C_{421}}\\
			\midrule
			
		\end{tabular}	
		\begin{tablenotes}
			\item * {\footnotesize These MDS symbol-pair codes are constructed by repeated-root cyclic codes.}
		\end{tablenotes}
	\end{threeparttable}
\end{table}

Next, we will explain that there is no MDS symbol pair code except in Table \ref{table-4}, when the degree of generator polynomials $g_{(r_{1},r_{2},r_{3})}(x)$ does not exceed 10.

Similarly, all AMDS symbol-pair codes with $d_{p}<12$ can be deduced from the following propositions. We list all AMDS symbol-pair codes in the following Table \ref{table5}.	
	\begin{table}\label{table-5}
		\centering
		\begin{threeparttable}[b]
			\caption{All AMDS symbol-pair codes with length $3p$ for $d_{p}< 12$}\label{table5}
			\begin{tabular}{lllll}
				\hline
				$r_1$ & $r_2$ & $r_3$ & \makecell{$(n-k,d_{p})_{p}$} & Reference or Proposition \\
				\midrule
				4 & 3 & 2 & \makecell{$(9,10)_{p}$} &Reference \cite{Ma-2022-DCC} \\
				\midrule 
				0 & 3 & 0 & \makecell{$(3,4)_{p}$} & Proposition \ref{proposition of C_{rrr}}\\
				\midrule
				2 & 2 & 1 & \makecell{$(5,6)_{p}$} & Proposition \ref{proposition of C_{rrr}}\\ 
				\midrule					
				3 & 2 & 0 & \makecell{$(5,6)_{p}$} &Proposition \ref{proposition of C_{rrr}}\\ 
				\midrule										
				3 & 2 & 2 & \makecell{$(7,8)_{p}$} &Proposition \ref{proposition of C_{3rr}} \\ 
				\midrule
				3 & 3 & 1 & \makecell{$(7,8)_{p}$} &Proposition \ref{proposition of C_{3rr}} \\ 
				\midrule
				4 & 1 & 0 & \makecell{$(5,6)_{p}$} & Proposition \ref{proposition of C_{rrr}} \\
				\midrule 
				4 & 1 & 1 & \makecell{$(6,7)_{p}$} &Proposition \ref{proposition of C_{3rr}}\\
				\midrule
				4 & 3 & 1 & \makecell{$(8,9)_{p}$} &Proposition \ref{proposition of C_{3rr}}\\
				\midrule
				5 & 2 & 1 & \makecell{$(8,9)_{p}$} &Proposition \ref{proposition of C_{3rr}} \\
				\midrule					
				5 & 2 & 2 &\makecell{$(9,10)_{p}$}&Proposition  \ref{proposition of C_{3rr}}\\
				\hline
			\end{tabular}
			\begin{tablenotes}
			\item * {\footnotesize These AMDS symbol-pair codes are constructed by repeated-root cyclic codes.}
			\end{tablenotes}
		\end{threeparttable}
	\end{table}

In what follows, let's determine the minimum symbol-pair distance for $\mathcal{C}_{(r_{1},r_{2},r_{3})}$ in the previous paper by using the following propositions.

\begin{Proposition}\label{proposition of C_{rrr}}
	\item 
\begin{enumerate}
\item The minimum distance of $\mathcal{C}_{(0,r_{2},0)}$ is 4, when $2\le r_{2} \le p-1$.
\item The minimum distance of $\mathcal{C}_{(2,1,0)}$ is 5.
\item The minimum distance of $\mathcal{C}_{(r_{1},r_{2},0)}$ is 6, when   $r_{1}+r_{2} \ge 4, r_{2}\ge 1$.
\item The minimum distance of $\mathcal{C}_{(2,{r_{2},r_{3})}}$ is 6, when  $2\le r_{2}+r_{3}\le 4$.
\end{enumerate}	
\end{Proposition}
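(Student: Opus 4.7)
The plan is to prove each item by establishing matching upper and lower bounds on $d_p$. Two explicit codewords do nearly all of the upper-bound work. The first is $c_1(x) := x^p - \omega$: since $\omega^p = \omega$ and $(x^p)' \equiv 0$ in $\mathbb{F}_p[x]$, Frobenius gives $c_1 = (x-\omega)^p$, so $c_1 \in \mathcal{C}_{(0,r_2,0)}$ for every $r_2 \le p-1$, and its support $\{0,p\}$ is non-adjacent in length $3p$, yielding $\omega_p(c_1) = 4$. The second is $c_2(x) := 1 + \omega x^p + \omega^2 x^{2p}$: the same Frobenius calculation yields $c_2 = (1 + \omega x + \omega^2 x^2)^p = \omega^2 (x-1)^p(x-\omega)^p$, so $c_2 \in \mathcal{C}_{(r_1,r_2,0)}$ for every $r_1, r_2 \le p-1$, with $\omega_p(c_2) = 6$. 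Lower bounds will come either from Lemma \ref{lemma 3.3 } after pinning down $d_H$, or from the elementary observation that $g \mid g'$ implies $\langle g'\rangle \subseteq \langle g\rangle$, which allows one to inherit $d_p$ from a known MDS/AMDS parent.

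For item (1), $c_1$ gives $d_p \le 4$ and $d_H \le 2$; weight-$1$ codewords are excluded because $\omega \ne 0$, and the parameters $[3p, 3p - r_2, 2]$ show the code is not classically MDS for $r_2 \ge 2$, so Lemma \ref{lemma 3.3 } yields $d_p \ge 4$. For item (2), the generator $g_{(2,1,0)}(x) = (x-1)^2(x-\omega)$ is itself a weight-$4$ codeword supported on four consecutive coordinates, so $d_p \le 5$; for the lower bound I show $d_H = 3$ by ruling out weight-$2$ codewords $\alpha x^a + \beta x^b$ (the conditions $c(1) = c'(1) = c(\omega) = 0$ force $a \equiv b$ modulo both $p$ and $3$, hence modulo $3p$, impossible in range) and exhibiting the weight-$3$ codeword $c_2 \in \mathcal{C}_{(2,1,0)}$, after which Lemma \ref{lemma 3.3 } delivers $d_p \ge 5$.

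For item (3), $c_2$ gives $d_p \le 6$ uniformly. For $d_p \ge 6$ I split: if $r_1 \ge 3$ and $r_2 \ge 1$, then $(x-1)^3(x-\omega)$ divides $g_{(r_1,r_2,0)}$, so $\mathcal{C}_{(r_1,r_2,0)} \subseteq \mathcal{C}_{(3,1,0)}$, which is MDS with $d_p = 6$ by Lemma \ref{lemma 3.7 }; the only remaining possibility under $r_1 \ge r_2 \ge 1$ and $r_1 + r_2 \ge 4$ is $(r_1, r_2) = (2, 2)$, handled by Proposition \ref{Proposition 3.4}. For item (4), the codeword $(x^3 - 1)^2 = (x-1)^2(x-\omega)^2(x-\omega^2)^2$ lies in every $\mathcal{C}_{(2,r_2,r_3)}$ with $r_2, r_3 \le 2$ and has support $\{0,3,6\}$, giving $d_p \le 6$; for the lower bound, when $r_2, r_3 \ge 1$ the code is a subcode of $\mathcal{C}_{(2,1,1)}$, which is MDS $(3p,6)_p$ by \cite{Chen-2017-CIT} (see Table \ref{table-4}), while the edge case $(r_2,r_3) = (2,0)$ is Proposition \ref{Proposition 3.4}. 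The main obstacle is the careful verification that $c_1$, $c_2$, and $(x^3-1)^2$ really lie in the relevant ideals to the required multiplicities—this hinges on the Frobenius identity $f(x)^p = f(x^p)$ in $\mathbb{F}_p[x]$ together with $\omega^p = \omega$ and $1 + \omega + \omega^2 = 0$—after which the rest is routine bookkeeping matching each $(r_1, r_2, r_3)$ in range with its smallest known MDS/AMDS parent code.
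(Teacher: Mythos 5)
Your proof is correct, and it reaches the same conclusions by a recognizably similar overall strategy (lower bounds via Lemma \ref{lemma 3.3 } and subcode inheritance from known MDS codes, upper bounds via low-pair-weight codewords), but the execution differs in a worthwhile way. The paper obtains every upper bound implicitly, by invoking Castagnoli's formula (Lemma \ref{lemma 3.1}) to compute $d_H$ and then using $d_p\le 2d_H$; it disposes of item (2) entirely by citing Corollary \ref{theorem 3.6}, and of items (3)--(4) by one-line subcode remarks. You instead exhibit the Frobenius-power codewords $(x-\omega)^p$, $\omega^2(x-1)^p(x-\omega)^p$ and $(x^3-1)^2$ explicitly, which makes the upper bounds self-contained and verifiable without Lemma \ref{lemma 3.1}, and your direct elimination of weight-$2$ codewords in item (2) replaces the appeal to Corollary \ref{theorem 3.6} with an elementary computation. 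Your version also handles a point the paper glosses over: in item (4) the tuple $(r_1,r_2,r_3)=(2,2,0)$ satisfies $2\le r_2+r_3\le 4$ but $\mathcal{C}_{(2,2,0)}$ is \emph{not} a subcode of $\mathcal{C}_{(2,1,1)}$, so the paper's stated argument does not literally cover it; you correctly reroute that case through Proposition \ref{Proposition 3.4}. Two small points worth tightening: in item (2) you should note that the coefficients $-(\omega+2)$ and $2\omega+1$ of $g_{(2,1,0)}$ are nonzero (they vanish only when $p=3$, excluded since $p\equiv 1\pmod 3$), although even if one vanished the pair weight would still be at most $5$; and in item (1) the conclusion $d_p\le 4$ can alternatively be read off from $d_p\le 2d_H$ once $d_H=2$ is known, which is how the paper argues.
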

\begin{proof}

 For $\mathcal{C}_{(0,r_{2},0)}=\left\langle {{(x - \omega )}^{r_{2}}} \right\rangle$, Lemma \ref{lemma 3.1} shows $d_{H}(\mathcal{C}_{(0,r_{2},0)})=2$, then one can obtain $d_{p}(\mathcal{C}_{(0,r_{2},0)})=4$ by Lemma \ref{lemma 3.3 }.

Corollary \ref{theorem 3.6} proves the minimum distance of $\mathcal{C}_{(2,1,0)}$ is $d_{p}=5$.
	
Since $\mathcal{C}_{(r_{1},r_{2},0)}$ is a subcode of Theorem \ref{theorem 3.6} and Lemma \ref{lemma 3.1} means $d_{H}(\mathcal{C}_{(r_{1},r_{2},0)})=3$. Then we have $d_{p}(\mathcal{C}_{(r_{1},r_{2},0)})=6$.

Since $\mathcal{C}_{(2,{r_{2},r_{3}})}$ is a subcode of  $\mathcal{C}_{(2,1,1)}$ and $\mathcal{C}_{(2,1,1)}$ is an MDS symbol-pair code with the mimnmum symbol-pair distance 6, when   $2\le r_{1}+r_{2} \le 4$ and $r_{1},r_{2} \in {\mathbb F}_p^{\text{*}}$. Then we can deduce   $d_{p}(\mathcal{C}_{(2,{r_{2},r_{3}})})=6$.
\end{proof}
We can obtain all MDS and AMDS symbol-pair codes with length $3p$ with a minimum symbol-pair distance of 4 to 6 using Proposition \ref{proposition of C_{rrr}}. Then, we check for MDS and AMDS symbol-pair codes with a symbol pair distance of 7 to 12.

\begin{Proposition}\label{proposition of C_{3rr}}
		\item 
	\begin{enumerate}
		\item The minimum distance of $\mathcal{C}_{(r_{1},1,1)}$ is 7, when   $3\le r_{1}\le p-1$.
		\item The minimum distance of $\mathcal{C}_{(3,r_{2},r_{3})}$ is 8, when   $3\le r_{2}+r_{3}\le6, r_{3}\ge 1$.
		\item The minimum distance of $\mathcal{C}_{(r_{1},r_{2},1)}$ is 9, when  $2\le r_{2}\le r_{1}\le p-1, r_{1}\ge 4$.
		\item 	The minimum distance of $\mathcal{C}_{(r_{1},r_{2},r_{3})}$ is 10, when   $ r_{1}$, $ r_{2}$ and $ r_{3}$ meet any of the following two conditions
		\begin{itemize}
			\item $4\le r_{1}\le p-1$ and  $ r_{2}= r_{3}=2$,
			\item $r_{1}= 4$ and  $4\le r_{2}+ r_{3}\le 8, r_{3}\ge 1$.
		\end{itemize}
		\end{enumerate}
\end{Proposition}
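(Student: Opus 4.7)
My plan is to prove each of the four claims by the same two-step strategy: first establish a lower bound on $d_{p}$ by embedding $\mathcal{C}_{(r_{1},r_{2},r_{3})}$ as a subcode of a code already listed in Table \ref{table-4} or Table \ref{table5}, and then produce a matching upper bound by exhibiting an explicit codeword of the claimed symbol-pair weight.

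For the lower bounds I proceed case by case using the standing assumption $p-1\ge r_{1}\ge r_{2}\ge r_{3}\ge 0$. In part (1), $r_{1}\ge 3$ gives $\mathcal{C}_{(r_{1},1,1)}\subseteq \mathcal{C}_{(3,1,1)}$, which is MDS $(5,7)_{p}$, so $d_{p}\ge 7$. In part (2), the ordering together with $r_{2}+r_{3}\ge 3$ forces $r_{2}\ge 2$, hence $\mathcal{C}_{(3,r_{2},r_{3})}\subseteq \mathcal{C}_{(3,2,1)}$ (MDS $(6,8)_{p}$) and $d_{p}\ge 8$. In part (3), $r_{1}\ge 4$ and $r_{2}\ge 2$ yield $\mathcal{C}_{(r_{1},r_{2},1)}\subseteq \mathcal{C}_{(4,2,1)}$, which is MDS $(7,9)_{p}$ by Proposition \ref{Proposition C_{421}}, giving $d_{p}\ge 9$. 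In part (4) subcase (i) we have $\mathcal{C}_{(r_{1},2,2)}\subseteq \mathcal{C}_{(4,2,2)}$ (MDS $(8,10)_{p}$ by \cite{Ma-2022-DCC}), so $d_{p}\ge 10$; in subcase (ii) the same containment handles every $(r_{2},r_{3})$ with $r_{3}\ge 2$, while the remaining cases $(4,3,1)$ and $(4,4,1)$ only embed in $\mathcal{C}_{(4,2,1)}$ and therefore yield only $d_{p}\ge 9$.

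For the matching upper bounds I construct codewords of the form $(x^{p}-1)^{a}(x^{p-1}-1)^{b}f(x)$ with a short auxiliary polynomial $f(x)\in\mathbb{F}_{p}[x]$ built from the factors $(x-\omega)$ and $(x-\omega^{2})$. Because $x^{p}-1=(x-1)^{p}$ and $x^{p-1}-1=\prod_{\zeta\in\mathbb{F}_{p}^{*}}(x-\zeta)$, a suitable choice of $(a,b,f)$ automatically forces divisibility by $g_{(r_{1},r_{2},r_{3})}(x)$ while keeping the coefficient support contained in a union of short blocks located at multiples of $p$. For $p$ larger than the width of each block, the blocks are disjoint and the symbol-pair weight of the candidate is the sum of the per-block contributions, which can be tuned to match the target $d_{p}$. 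The codeword $(x^{p}-1)(x^{p-1}-1)$ used in Theorem \ref{theorem3.9}, whose symbol-pair weight is $7$, already handles part (1) for every $r_{1}\le p-1$; analogous choices cover the upper bounds in parts (2)--(4).

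The principal obstacle is part (4) subcase (ii) with $r_{3}=1$, i.e.\ the codes $\mathcal{C}_{(4,3,1)}$ and $\mathcal{C}_{(4,4,1)}$, where containment yields only $d_{p}\ge 9$. To upgrade this to $d_{p}\ge 10$ one must rule out every codeword of symbol-pair weight exactly $9$. Following the method of Proposition \ref{Proposition C_{421}}, I would enumerate the possible support shapes of such a codeword, organised by Hamming weight $5$--$8$ together with the gap pattern of the non-zero coordinates, impose the vanishing relations $c^{(i)}(1)=0$ for $0\le i\le 3$, $c^{(j)}(\omega)=0$ for $0\le j<r_{2}$, and $c(\omega^{2})=0$, and reduce each configuration to a small linear system over $\mathbb{F}_{p}(\omega)$. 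The contradictions then follow from $p$ being an odd prime, $\omega^{3}=1$ with $\omega\ne 1$, the degree bound $\deg(c)<\deg g_{(4,r_{2},1)}$, and a case split on the residue modulo $3$ of the gap positions, paralleling Subcases 1.1--2.3 of Proposition \ref{Proposition C_{421}}. This case analysis is the technical heart of the proposition.
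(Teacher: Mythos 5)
Your lower bounds for parts (1)--(3) and for part (4) with $r_{2}=r_{3}=2$ match the paper's subcode argument, and your explicit codewords of the form $(x^{p}-1)^{a}(x^{p-1}-1)^{b}f(x)$ are a sound way to get the matching upper bounds (indeed $(x^{p}-1)(x^{p-1}-1)$ is a cleaner witness for part (1) than the polynomial the paper writes down, which does not actually vanish at $\omega$). For parts (2) and (4)(ii) the paper instead gets the upper bound for free from $d_{H}=4$ (resp.\ $d_{H}=5$) via $d_{p}\le 2d_{H}$, which spares it the construction of explicit codewords; either route is fine.

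The genuine problem is your treatment of $\mathcal{C}_{(4,3,1)}$ and $\mathcal{C}_{(4,4,1)}$. You correctly observe that these are not subcodes of $\mathcal{C}_{(4,2,2)}$, but the case analysis you then propose --- ruling out all codewords of symbol-pair weight $9$ so as to upgrade $d_{p}\ge 9$ to $d_{p}\ge 10$ --- cannot succeed, because these codes \emph{do} contain a codeword of symbol-pair weight $9$: the polynomial $(x-1)^{p}(x-\omega)^{p}(x-\omega^{2})=x^{2p+1}-\omega^{2}x^{2p}+\omega^{2}x^{p+1}-\omega x^{p}+\omega x-1$ used in part (3) lies in $\mathcal{C}_{(r_{1},r_{2},1)}$ for all $r_{1},r_{2}\le p$ and has pair weight $9$, so $d_{p}(\mathcal{C}_{(4,3,1)})=d_{p}(\mathcal{C}_{(4,4,1)})=9$ by part (3) itself (this is also what Table \ref{table5} records for $(4,3,1)$). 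In other words, item (4)(ii) as literally stated with ``$r_{3}\ge 1$'' overlaps inconsistently with item (3); the only consistent reading is $r_{3}\ge 2$, under which the containment in $\mathcal{C}_{(4,2,2)}$ does hold and the paper's argument (containment for the lower bound, $d_{H}=5$ hence $d_{p}\le 10$ for the upper bound) closes the case with no further work. You should have recognized the clash with part (3) and corrected the hypothesis rather than committing the ``technical heart'' of your proof to establishing a false inequality.
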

\begin{proof}
	
	In reference \cite{Chen-2017-CIT},  $\mathcal{C}_{(2,1,1)}$, $\mathcal{C}_{(3,1,1)}$ and $\mathcal{C}_{(3,2,1)}$ are MDS symbol-pair codes with symbol-pair distances of 6, 7 and 8, respectively.  Reference \cite{Ma-2022-DCC} proved $\mathcal{C}_{(4,2,2)}$ and $\mathcal{C}_{(5,3,2)}$ are MDS symbol-pair codes with symbol-pair distances of 10 and 12, respectively.	
\begin{enumerate}

	\item 	Since $\mathcal{C}_{(r_{1},1,1)}$ is a subcode of $\mathcal{C}_{(3,1,1)}$, we have  $d_{p}(\mathcal{C}_{(r_{1},1,1)})\ge 7$. 
	Then the minimum symbol-pair distance $d_{p}(\mathcal{C}_{(r_{1},1,1)})= 7$ can be obtained by $\omega_p(c(x))=7$, where $$c(x)=1-x-x^p+x^{2p+1}$$ is a codeword of  $\mathcal{C}_{(r_{1},1,1)}$.
	
	\item  Reference \cite{Chen-2017-CIT} proved that $\mathcal{C}_{(3,2,1)}$ is an MDS $(3p,8)_{p}$ symbol-pair code. Since $3\le r_{2}+r_{3}\le6$, we have $\mathcal{C}_{(3,r_{2},r_{3})}$ is subcode of $\mathcal{C}_{(3,2,1)}$ and $d_{p}(\mathcal{C}_{(3,r_{2},r_{3})})\ge 8$.
	
	By Lemma \ref{lemma 3.1}, we can deduce that the minimum Hamming distance of $\mathcal{C}_{(3,r_{2},r_{3})}$ is 4, which implies  $d_{p}(\mathcal{C}_{(3,r_{2},r_{3})})\le 8$. 
	
	Therefore, the minimum distance of $\mathcal{C}_{(3,r_{2},r_{3})}$ is $d_{p}=8$, when   $3\le r_{2}+r_{3}\le6, r_{3}\ge 1$.
	
	\item With arguments similar as the proof of case, since $\mathcal{C}_{(4,2,1)}$ is an MDS symbol-pair code with the minimum symbol-pair distance 9, we can deduce that the minimum symbol-pair distance of $\mathcal{C}_{(r_{1},r_{2},1)}$ is $d_{p}(\mathcal{C}_{(r_{1},r_{2},1)})\ge 9$. Next, we prove that there are symbol-pair codewords with symbol-pair weight $d_{p}=9$ in $\mathcal{C}_{(r_{1},r_{2},1)}$.
	
	For the codeword 
	$$c(x)=(x-1)^p(x-\omega)^p(x-\omega^2)= {x^{2p + 1}} - {\omega ^2}{x^{2p}} + {\omega ^2}{x^{p + 1}} - \omega {x^p} + \omega x - 1,$$
	it is easy to verify that $c(x)$ is a codeword polynomial of $\mathcal{C}_{(r_{1},r_{2},1)}$ with the symbol-pair weight 9.
	Thus, we have $d_{p}(\mathcal{C}_{(r_{1},r_{2},1)})=9$.
	
	\item For $4\le r_{1}\le p-1$ and  $ r_{2}= r_{3}=2$, with arguments similar as the proof of case, since $\mathcal{C}_{(4,2,2)}$ is an MDS symbol-pair code with $d_{p}=10$ and $\mathcal{C}_{(r_{1},2,2)}$ is a subcode of $\mathcal{C}_{(4,2,2)}$, we have $d_{p}(\mathcal{C}_{(r_{1},2,2)})\ge 10$. 
	Note that the codeword polynomial
	$$c(x)=1-x^2+2x^{p+1}+x^{p+2}-x^{2p}-2x^{2p+1}$$
	is a codeword of $\mathcal{C}_{(r_{1},2,2)}$ and $\omega_p(c(x))=10$. Therefore, we derive the minimum symbol-pair distance of $\mathcal{C}_{(r_{1},2,2)}$ is 10.
	
	For $r_{1}= 4$ and  $4\le r_{2}+ r_{3}\le 8$, since $\mathcal{C}_{(4,r_{2},r_{3})}$ is a subcode of $\mathcal{C}_{(4,2,2)}$, we have $d_{p}(\mathcal{C}_{(4,r_{2},r_{3})})\ge 10$.	However, Lemma \ref{lemma 3.1} shows that $d_{H}(\mathcal{C}_{(4,r_{2},r_{3})})=5$, which implies that $d_{p}(\mathcal{C}_{(4,r_{2},r_{3})})\le 10$. Thus, we can deduce $d_{p}(\mathcal{C}_{(4,r_{2},r_{3})})= 10$.
\end{enumerate}		
\end{proof}
 According to Proposition \ref{proposition of C_{3rr}}, we can obtain all MDS and AMDS symbol-pair codes with length $3p$ with a minimum symbol-pair distance of 7 to 10. We can also deduce that $\mathcal{C}_{(r_{1},r_{2},r_{3})}$ is an MDS symbol-pair code with $d_{p}=12$, iff $(r_{1},r_{2},r_{3})=(5,3,2)$. Furthermore, there dose not exist codeword with a minimum symbol pair distance of 11, when the degree of the generator polynomials  $\deg (g_{(r_{1},r_{2},r_{3})}(x))$ are 9 and 10. The following corollary can be drawn.
\begin{corollary} \label{coro dp=11}
The repeated-root cyclic code $\mathcal{C}_{(r_{1},r_{2},r_{3})}$ must not be the MDS and the AMDS symbol-pair code with a minimum symbol-pair distance 11. 	
\end{corollary}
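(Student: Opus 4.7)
The plan is to reduce the statement to a finite case analysis. By Lemma \ref{lemma 3.4 }, an MDS symbol-pair code of length $3p$ with $d_p=11$ must have $n-k=d_p-2=9$, while an AMDS symbol-pair code with $d_p=11$ must have $n-k=d_p-1=10$. Since $n-k=\deg g_{(r_1,r_2,r_3)}=r_1+r_2+r_3$, it suffices to show that $d_p(\mathcal{C}_{(r_1,r_2,r_3)})\neq 11$ for every admissible triple with $r_1+r_2+r_3\in\{9,10\}$. Invoking the symmetry Remark \ref{proposion 3.9}, I may further assume $p-1\ge r_1\ge r_2\ge r_3\ge 0$, which trims the enumeration to twelve triples of degree $9$ and fourteen of degree $10$.

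First I would tabulate these triples and, for each one, cite the previously proved statement that determines the exact $d_p$. Concretely: Proposition \ref{proposition of C_{rrr}}(3) disposes of every $(r_1,r_2,0)$ with $r_1+r_2\ge 4$ and $r_2\ge 1$, yielding $d_p=6$; Proposition \ref{proposition of C_{3rr}} covers the patterns $(r_1,1,1)$ with $r_1\ge 3$, $(3,r_2,r_3)$ with $r_3\ge 1$, $(r_1,r_2,1)$ with $r_1\ge 4$ and $r_2\ge 2$, and $(r_1,2,2)$ or $(4,r_2,r_3)$, placing $d_p$ in $\{7,8,9,10\}$; and the remaining triple $(5,3,2)$ appears in Table \ref{table-4} as an MDS code with $d_p=12$. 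A direct check then shows that $d_p$ ranges over $\{4,6,7,8,9,10\}$ for the degree-$9$ triples and over $\{4,6,7,9,10,12\}$ for the degree-$10$ triples, so the value $11$ is never attained.

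The main obstacle is the two degenerate triples $(9,0,0)$ and $(10,0,0)$, which fall outside the hypotheses of Propositions \ref{proposition of C_{rrr}}(3) and \ref{proposition of C_{3rr}} because they have $r_2=r_3=0$. My plan for these is to invoke Remark \ref{proposion 3.9} once more to identify $\mathcal{C}_{(r,0,0)}$ with $\mathcal{C}_{(0,r,0)}$ and then appeal to Proposition \ref{proposition of C_{rrr}}(1), which gives $d_p=4$ for $2\le r\le p-1$. Equivalently one may observe directly that $(x-1)^r$ divides $(x-1)^p=x^p-1$ in $\mathbb{F}_p[x]$ for every $r\le p$, producing a codeword of Hamming weight $2$ in $\mathcal{C}_{(r,0,0)}$, so that $d_p\le 2d_H=4$ by the bound $d_H+1\le d_p\le 2d_H$ recalled in Section 2. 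Either argument rules out $d_p=11$, and combining it with the tabulated cases yields the corollary.
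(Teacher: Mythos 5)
Your proposal is correct and follows essentially the same route as the paper: the paper derives this corollary directly from the case enumeration in Propositions \ref{proposition of C_{rrr}} and \ref{proposition of C_{3rr}} together with Remark \ref{proposion 3.9}, observing that every admissible triple with $r_1+r_2+r_3\in\{9,10\}$ yields a symbol-pair distance different from $11$. Your write-up merely makes explicit what the paper leaves implicit (the Singleton-bound reduction to degrees $9$ and $10$, the full list of ordered triples, and the degenerate cases $(9,0,0)$ and $(10,0,0)$), and all of these details check out.
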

Based on the above Proposition \ref{Proposition C_{421}} to Proposition \ref{proposition of C_{3rr}}, Corollary \ref{coro dp=11} and Remark  \ref{proposion 3.9}, we complete the proof of Theorem \ref{theorem of 3p MDS}. In addition, we also explain that the MDS symbol-pair codes in Table \ref{table-4} and the AMDS symbol-pair codes in Table \ref{table5} are all cases that meet the requirements.

Proposion \ref{proposition of C_{3rr}} shows that the condition does not satisfy Theorem \ref{theorem of 3p MDS}, when $r\ge 5$ and $r_{1} = r_{2} + r_{3} + 1$. Next, we use an example to illustrate that the conditions of  Theorem \ref{theorem of 3p MDS}  are also no longer applicable, when $r>5$ and $r_{1} = r_{2} + r_{3} $.
\begin{example}
	Let $\mathcal{C}$ and  be a repeated-root cyclic code over ${{\mathbb F}_7}$ and the generator ploynomial of $ \mathcal{C} $  is
	$$g\left( x \right) = (x-1)^6(x - 2)^3(x - 4 )^3,$$
	where $\omega=2$ is a 3-th primitive element in ${{\mathbb F}_7}$ and $2^2=4$. 
	
	Then we have the minimum Hamming distance $d_H=7$ by a magma progarm.
	Reference \cite{Ma-2022-DCC} shows the symbol-pair distance $d_p\ge 12$. The magma program also shows that both vectors
	$$\textbf{a}=[0\; 0\; 1\; 0\; 0\; 0\; 0\; 0\; 0\; 6\; 6\; 3\; 3\; 3\; 4\; 4\; 4\; 5\; 1\; 1\; 1]$$
	and
	$$\textbf{b}=[0\; 0\; 0\; 0\; 0\; 0\; 1\; 0\; 0\; 1\; 1\; 1\; 5\; 4\; 4\; 4\; 3\; 3\; 3\; 6\; 6]$$
	are in $\mathcal{C}$. Let $\textbf{c}=\textbf{a+b}$, we have
	$$\textbf{c}=[0\; 0\; 1\; 0\; 0\; 0\; 1\; 0\; 0\; 0\; 0\; 4\; 1\; 0\; 1\; 1\; 0\; 1\; 4\; 0\; 0],$$
	which is also in $\mathcal{C}$. We can easily deduce $\omega_p(\textbf{c})=13$.
	Therefore, $\mathcal{C}$ is not an MDS symbol-pair code.	
\end{example}

\section{Conclusion}
In this paper, employing repeated-root cyclic codes, some new classes of MDS and AMDS symbol-pair codes over ${\mathbb F}_p$
with lengths $ lp $  and $3p$ are provided. We give some more general generator polynomials about MDS $(lp,5)_{p}$ and $(lp,6)_{p}$ symbol-pair codes. We also present a class of AMDS $(lp,7)_{p}$ symbol-pair codes. For length $3p$, we provide all MDS symbol-pair codes with $d_{p}\le 12$ and also provide all AMDS symbol-pair codes with $d_{p}< 12$.


\begin{thebibliography}{99}

\bibitem{Cassuto-2010-SIT} Y. Cassuto and M. Blaum, Codes for symbol-pair read channels, in Proc. IEEE Int. Symp. Inf. Theory (ISIT2010) 988-992.

\bibitem{Cassuto-2011-TIT} Y. Cassuto and M. Blaum, Codes for symbol-pair read channels, IEEE Trans. Inf. Theory  57(12)(2011)  8011–8020.

\bibitem{Cassuto-2011-SIT} Y. Cassuto and S. Litsyn, Symbol-pair codes: Algebraic constructions and asymptotic bounds, in Proc. IEEE Int. Symp. Inf. Theory, (ISIT2011) 2348–2352.

\bibitem{Chee-2012-SIT} Y. M. Chee, H. M. Kiah, and C. Wang, “Maximum distance separable symbol-pair codes, in Proc. IEEE Int. Symp. Inf. Theory (ISIT2012)  2886-2890. 

\bibitem{Chee-2013-SIT} Y. M. Chee, L. Ji, H. M. Kiah, C. Wang, and J. Yin, Maximum distance separable codes for symbol-pair read channels, IEEE Trans. Inf. Theory 59(11)(2013)  7259-7267.

\bibitem{Kai-2015-TIT} X. Kai, S. Zhu, and P. Li, A construction of new MDS symbol-pair codes, IEEE Trans. Inf. Theory 61(11)(2015) 5828-5834.

\bibitem{Kai-2018-NCT} X. Kai, S. Zhu, Y. Zhao, H. Luo, and Z. Chen, New MDS symbol-pair codes from repeated-root codes, IEEE Commun. Letters  22(3)(2018)  462-465.

\bibitem{Li-2017-DCC} S. Li and G. Ge, Constructions of maximum distance separable symbol-pair codes using cyclic and constacyclic codes, Des. Codes Cryptogr 84(3)(2017) 359-372.

\bibitem{Chen-2017-CIT} B. Chen, L. Lin, and H. Liu, Constacyclic symbol-pair codes: lower bounds and optimal constructions, IEEE Trans. Inf. Theory 63(12)(2017) 7661-7666.

\bibitem{Dinh-2018-TIT} H. Q. Dinh, B. T. Nguyen, A. K. Singh, and S. Sriboonchitta, On the symbol-pair distance of repeated-root constacyclic codes of prime power lengths, IEEE Trans. Inf. Theory 64(4)(2018) 2417–2430.

\bibitem{Dinh-2020-TIT} H. Q. Dinh, B. T. Nguyen and S. Sriboonchitta , MDS Symbol-Pair Cyclic Codes of Length $2 p^ s $ over ${\mathbb F}_{p^{\text{m}}}$, IEEE Transactions on Information Theory  66(1)(2020) 240-262.

\bibitem{Zhao-2019-Doc} W. Zhao, Research on algebraic theory of several types of error-correcting codes, Thesis (Ph.D.), South China University of Technology, 2019.

\bibitem{Zhao-2021-CL} W. Zhao, S. Yang, K. W. Shum and X. Tang, Repeated-Root Constacyclic Codes With Pair-Metric, IEEE Commun. Letters 25(8)(2021), 2502-2506.

\bibitem{Ma-2022-DCC} J. Ma and  J. Luo, MDS symbol-pair codes from repeated-root cyclic codes, Des. Codes Cryptogr 90(1)(2022) 121-137.

\bibitem{Castagnoli-1991-OIT} G. Castagnoli, J. L. Massey, P. A. Schoeller, and N. von Seemann, On repeated-root cyclic codes, IEEE Trans. Inf. Theory 37(2)(1991) 337-342.
\end{thebibliography}
\end{document}